\newcommand*{\prob}[1]{\textsc{#1}}      
\newcommand*{\probc}[1]{\textsf{#1}}     
\newcommand*{\algo}[1]{\texttt{#1}}      
\newcommand*{\algoi}[2]{\texttt{#1}(#2)} 
\DeclarePairedDelimiter{\brnX}{(}{)}
\DeclarePairedDelimiter{\brcX}{\{}{\}}
\DeclarePairedDelimiter{\absX}{\lvert}{\rvert}
\DeclarePairedDelimiter{\ceilX}{\lceil}{\rceil}
\DeclarePairedDelimiter{\floorX}{\lfloor}{\rfloor}
\newcommand*{\brn}{\brnX*}               
\newcommand*{\brc}{\brcX*}               
\newcommand*{\abs}{\absX*}               
\newcommand*{\floor}{\floorX*}           
\newcommand*{\ceil}{\ceilX*}             
\DeclareMathOperator{\ohop}{O}
\DeclareMathOperator{\lohop}{o}
\newcommand*{\Oh}[1]{\ohop\brn{#1}}   
\newcommand*{\oh}[1]{\lohop\brn{#1}}  
\DeclareMathOperator{\Vop}{V}
\DeclareMathOperator{\Eop}{E}
\DeclareMathOperator{\Nbop}{N}
\newcommand*{\V}[1]{\Vop\brn{#1}}         
\newcommand*{\E}[1]{\Eop\brn{#1}}         
\newcommand*{\Nb}[1]{\Nbop\brn{#1}}       
\newcommand*{\Nbr}[2]{\Nbop_{#1}\brn{#2}} 
\let\degd\deg
\let\deg\relax
\newcommand*{\deg}[1]{\degd\brn{#1}}       
\newcommand*{\N}{\mathbb{N}}                      
\newcommand*{\Z}{\mathbb{Z}}                      
\newcommand*{\R}{\mathbb{R}}                      
\newcommand*{\setb}[2]{\left\{#1 \mid #2\right\}} 
\newcommand*{\dnimply}{\kern.6em\not\kern-.6em \implies} 
\newcommand*{\sgn}[1]{\sgn\brn{#1}}                 
\newcommand*{\oper}[1]{\operatorname{#1}}           
\newcommand*{\tempc}[1]{}  
\newcommand{\ptitle}[1]{\gdef\prob@title{#1}}   
\newcommand{\pobject}[1]{\gdef\prob@object{#1}} 
\newcommand{\pquery}[1]{\gdef\prob@query{#1}}   
\newcommand{\pparam}[1]{\gdef\prob@param{#1}}   
  \def\prob@type{#1}
  \def\prob@topt{opt}%
  \def\prob@title{}%
  \def\prob@object{}%
  \def\prob@query{}%
  \def\prob@param{}%
    \def\prob@qword{Solution}%
    \def\prob@qword{Question}%
    \def\prob@trow{{\large\prob@title}}
    \def\prob@trow{
      \begin{tabular*}{\textwidth}{@{\extracolsep{\fill}}lr}%
        {\large\prob@title} & \textbf{Parameter:}~\prob@param%
      \end{tabular*}%
    }%
\title{Kernelizing Problems on Planar Graphs in Sublinear Space and Polynomial Time}
\author{Arindam Biswas}{Fakultät für Informatik und Automatisierung,\\Technische Universität Ilmenau,\\Germany}{arindam.biswas@tu-ilmenau.de}{https://orcid.org/0000-0003-4721-7971}{}
\author{Johannes Meintrup}{THM, University of Applied Sciences Mittelhessen,\\Giessen,\\Germany}{johannes.meintrup@mni.thm.de}{https://orcid.org/0000-0003-4001-1153}{Funded by the Deutsche Forschungsgemeinschaft (DFG, German Research Foundation) -- 379157101.}
\authorrunning{A.\ Biswas and J.\ Meintrup}
\keywords{Dominating Set, Vertex Cover, planar, parameterized, approximation, sublinear space, space-efficient, memory-efficient}
\newcommand*{\pDS}{\prob{Dominating Set}}
\newcommand*{\pVC}{\prob{Vertex Cover}}
\newcommand*{\cXL}{\probc{XL}}
\newcommand*{\cParaL}{\probc{para\text{-}L}}
\begin{document}

\maketitle

\begin{abstract}
In this paper, we devise a scheme for kernelizing, in sublinear space and polynomial time, various problems on planar graphs. The scheme exploits planarity to ensure that the resulting algorithms run in polynomial time and use $\Oh{(\sqrt{n} + k) \log{n}}$ bits of space, where $n$ is the number of vertices in the input instance and $k$ is the intended solution size. As examples, we apply the scheme to $\pDS$ and $\pVC$. For $\pDS$, we also show that a well-known kernelization algorithm due to Alber et al.\ (JACM 2004) can be carried out in polynomial time and space $\Oh{k \log{n}}$. Along the way, we devise restricted-memory procedures for computing region decompositions and approximating the aforementioned problems, which might be of independent interest.
\end{abstract}

\section{Introduction}

With the rise of big data and massive distributed systems, space requirements of algorithms are becoming a principal concern. In such scenarios, the size of the problem instance at hand may be too large to store in volatile memory. 
In this paper, we tackle the issue on two fronts using the kernelization approach: first, the kernelization itself is sublinear-space, for appropriate ranges of the parameter, and second, the kernel produced has size independent of the instance size.
Intuitively, a kernelization algorithm shrinks a given
large problem instance to an equivalent (but smaller) instance using a set of
reduction rules such that it can subsequently be solved using a
standard algorithm. We consider a setting were we measure our resources using
two parameters: the size $n$ of the input, and an additional problem-dependent
parameter $k$. Our kernelization algorithms shrink a given problem instance of
size $n$ to an instance of linear size in $k$, using polynomial time and sublinear space.
Roughly speaking, we try to minimize the dependence of the resource costs on $n$, but with the tradeoff
of increasing the dependence on $k$. Such parameterized algorithms
tend to be of interest in the case were one has a value of $k$ that is much smaller
than $n$: the problems we consider are $NP$-hard to solve in the general case, but for smallish parameter values, they can be solved in subexponential time.

We devise kernelization and approximation algorithms for $\pDS$ and $\pVC$ on planar graphs under space restrictions. In particular, we try to maintain the polynomial (or FPT) running time of standard (i.e.\ without a space restriction) algorithms for the
problems we consider, while trying to bring down the space requirements as much
as possible. We consider two well-known techniques specific to planar graphs, and show how they can be realized in less-than-linear space. The first is a polynomial-time approximation scheme (PTAS) for planar graphs due to Baker~\cite{Bak1994JACM}
(commonly referred to as \textit{Baker's technique}), and the second is a polynomial-time kernelization strategy for \pDS~\cite{AFN2004JACM} (later generalized to a scheme by Guo et al.~\cite{GN2007ICALP}) on planar graphs that produces kernels of linear size relative to the parameter. We consider our technique to be of general interest, as it is an interesting use of so-called region decompositions (outlined in later sections), which comprise small "regions" in a planar graph that satisfy certain conditions.
These regions have mostly been used as an analysis
tool in the design of kernelization algorithms for planar graphs. Indeed, Guo et al.~\cite{GN2007ICALP} devised a scheme for designing such algorithms
that works by first applying problem-specific reduction rules, and then using region decompositions to show that the reduction rules produce (linear) kernels. To our knowledge, the only work that also directly uses region decompositions directly in the kernelization algorithm (and not just in the analysis), is the $\prob{Connected Dominating Set}$ kernelization algorithm of Lokshtanov et al.~\cite{LMS2011TCS}. Our work can thus be seen as a novel
application of region decompositions. We believe our technique can be used to devise further kernelization algorithms for planar graphs that use sublinear space.\\

\noindent\textbf{Related Work.} The research on space-efficient parameterized problems
was arguably kickstarted by Cai et al.~\cite{CCDF1997AnnPureApplLogic}, who introduced
parameterized complexity classes now commonly referred to as $\cParaL$ and
$\cXL$. For example, the well-studied $\pVC$ problem (parameterized by solution size) is contained in
$\cParaL$. In the same vein, Flum and Grohe~\cite{FG2003InfComput} showed that parameterized model-checking problems
expressible in first-order logic are in $\cParaL$. Later, Elberfeld et al.~\cite{EJT2010FOCS} showed that \textit{Courcelle's Theorem}~\cite{Cou1990InfComput}, which concerns solving monadic second order problems on graphs with constant (or bound) treewidth, can be realized in logarithmic space.

As regards kernelization results, the work of Alber et al.~\cite{AFN2004JACM} prompted various follow-up results, such as computing a linear kernel in linear time~\cite{Hag2011IPEC,vHK+2012IPEC}, and linear kernels for $\pDS$-variants on planar graphs~\cite{FK2015MFCS,GST2017DAM,LWF+2013TCS}. For other graph classes, e.g.\ bounded-diameter graphs, results regarding $\pDS$ kernels exist as well~\cite{BSBB2014EJOR,FLST2013STACS,GRS2012TALG}.

\subsection*{The Model}
We work with the standard RAM model and count space in bits, i.e.\ the space bounds in our results are in bit units. We also impose a space constraint: an algorithm can only use $\oh{n}$ bits of work memory on inputs of size $n$. The algorithm has read-only access to its input; this ensures that the input may not be used as some kind of read-write memory, for example, using non-destructive rearrangement (see Chan et al.~\cite{CMR2018TALG}). The amount of read-write memory available to the algorithm is $\oh{n}$ bits in size, and we assume that output is written to a stream, i.e.\ items cannot be read back from the output.  This ensures that the output can also not be used to store information to be used by the algorithm while it runs.

All in all, ignoring polynomial-time computational overheads, our model is equivalent to a Turing machine with a read-only input tape, a write-only output tape and a read-write tape of length $\oh{n}$.

\noindent\textbf{Sublinearity.} The space costs our algorithms on $n$-vertex planar graphs have a $k \log{n}$ component, which becomes linear for $k = \Omega(n / \log{n})$. However, for such large parameter values, it is probably more profitable to use exponential-time algorithms. For parameter values $k = \Oh{n^{\epsilon}}$ for some $\epsilon < 1$, our algorithms become non-trivial: they use space $\Oh{(\sqrt{n} + n^{\epsilon}) \log{n}}$ and additionally only require FPT time.

\noindent\textbf{Difficulties.} It is pertinent to note that the restriction on available work memory makes even primitive operations difficult to carry out in polynomial time. For example, the simple polynomial-time operation of computing \emph{maximal independent sets} in graphs becomes difficult to perform in sublinear space: no general space-constrained algorithm for this task currently exists. On the other hand, the hard problem of computing \emph{dominating sets in tournaments} can be solved in space $\Oh{\log^2{n}}$, but the problem cannot be solved in polynomial time, unless the exponential time hypothesis (ETH)~\cite{IP1999CCC} is false. The catch is to solve the problems simultaneously in polynomial time and sublinear space.

\subsection*{Results and Techniques}
Our restricted-memory scheme for kernelizing $\pVC$ and $\pDS$ on planar graphs works as follows. Given a planar graph $G$, the scheme first computes an approximate dominating set $S$ for $G$ and then uses computes a \emph{region decomposition} for $G$ with respect to $S$. This decomposition is then processed in a problem-dependent manner to obtain a kernel.

The two main ingredients, i.e.\ the method for approximating $\pDS$ and $\pVC$ in sublinear space and the algorithm for computing region decompositions, might be of independent interest. In particular, the fact that region decompositions can be computed in sublinear space might be useful in restricted-space scenario.

The table gives an overview of our results for $n$-vertex planar graphs, with $k$ being the solution size and $S$ being some set of vertices.

\begin{center}
  \footnotesize
  \begin{tabular}{lrr}
    \toprule
    \multicolumn{1}{c}{Task}          & Time & Space\\
    \midrule
    computing a region decomposition wrt $S$          & polynomial              & $\Oh{\abs{S} \log{n}}$\\
    $(1 + \epsilon)$-approx. $\pDS$ and $\pVC$        & $n^{\Oh{1 / \epsilon}}$ & $\Oh{((1 / \epsilon) \log{n} + \sqrt{n}) \log{n}}$\\
    (one-off) kernelizing $\pDS$ to $\Oh{k}$ vertices & polynomial              & $\Oh{k \log{n}}$\\
    (scheme) kernelizing $\pDS$ to $1146k$ vertices   & polynomial              & $\Oh{(\sqrt{n} + k) \log{n}}$\\
    (scheme) kernelizing $\pVC$ to $46k$ vertices     & polynomial              & $\Oh{(\sqrt{n} + k) \log{n}}$\\
    \bottomrule
  \end{tabular}
\end{center}

\section{Preliminaries}
We use the following standard notation and concepts. The set $\brc{0, 1, \dotsc}$ of natural numbers is denoted by $\N$. For $n \in \Z^+$, $[n]$ denotes the set $\brc{1, 2, \dotsc, n}$. For a graph $G$, we denote by $\V{G}$ the vertex set, and by $\E{G}$ the edge set.

\subsection{Region Decompositions}\label{ssct:regions}
The problems we consider satisfy a certain \emph{distance property}: there are constants $c_V$ and $c_E$ such that for any input graph $G$ and a solution set $S$, the following conditions are satisfied.
\begin{itemize}
  \item For every $v \in \V{G}$, there is a vertex $w \in S$ such that $\oper{d}(u, w) \leq c_V$.
  \item For every edge $uv \in \E{G}$, there is a vertex $w \in S$ such that $\min\brc{\oper{d}(u, w), \oper{d}(v, w)} \leq c_E$.
\end{itemize}

\begin{remark}
  The distance property is satisfied by a number for problems. For $\pDS$, the constants are $c_V = c_E = 1$, and for $\pVC$, the constants are $c_V = 1$ and $c_E = 0$.
\end{remark}

Now suppose we have an embedding of $G$ in the plane. With respect to the constants $c_V$ and $c_E$, we consider \emph{region decompositions}.

\begin{definition}[Guo \& Niedermeier~\cite{GN2007ICALP}, Definition 2]\label{defn:region}
A \emph{region} $R(u, v)$ between two distinct vertices $u, v \in S$ for any $S
\subseteq V$ and constants $c_V, c_E$ is a closed subset of the plane with the
following properties:
\begin{itemize}
  \item The \emph{boundary} of $R(u, v)$ is formed by two (not necessarily
  disjoint or simple) length-at-most-$(c_V +c_E + 1)$ paths between $u$ and $v$.
  \item All vertices which lie on the boundary or strictly inside the region
$R(u, v)$ have distance at most $c_V$ to at least one of the vertices u and v
and all
  \item Except $u$ and $v$, none of the vertices which lie inside the region
  $R(u, v)$ are from $S$.
\end{itemize}
The vertices $u$ and $v$ are called the anchor vertices of $R(u, v)$. A vertex
is \emph{inside of $R(u, v)$} if it is a boundary vertex of $R(u, v)$ or lies strictly inside $R(u, v)$.
 We use $V(R(u, v))$ to denote the set of
vertices that lie inside a region $R(u, v)$.
\end{definition}

\begin{definition}[Guo \& Niedermeier~\cite{GN2007ICALP}, Definition 3]
  An $S$-region decomposition of an embedded planar graph $G=(V, E)$ for any $S \subseteq V$ and constants $c_V, c_E$ is a set $\mathcal{R}$ of regions such that there is no vertex that lies strictly inside more than one region of $\mathcal{R}$ (the boundaries of regions may overlap). For an $S$-region decomposition $\mathcal{R}$, let $V(\mathcal{R}) := \bigcup_{R \in \mathcal{R}}V(R)$. An $S$-region decomposition $\mathcal{R}$ is called maximal if there is no region $R \notin \mathcal{R}$ such that $\mathcal{R'} := \mathcal{R} \cup R$ is an $S$-region decomposition with $V(\mathcal{R}) \subsetneq V(\mathcal{R}')$.
\end{definition}

It is known that the number of regions in a maximal region decomposition is upper-bounded by $c_V (3\abs{S} - 6)$.

\begin{proposition}[Guo and Niedermeier~\cite{GN2007ICALP}, Lemma 1]
  For the appropriate problem-dependent constant $c_V$, any maximal $S$-region decomposition of $G$ contains at most $c_V (3\abs{S} - 6)$ regions.
\end{proposition}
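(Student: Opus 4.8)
The plan is to reduce the count of regions to the classical edge bound for simple planar graphs, namely that a simple planar graph on $m$ vertices has at most $3m - 6$ edges (a consequence of Euler's formula), applied with $m = \abs{S}$.

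First I would build an auxiliary plane (multi)graph $H$ on the vertex set $S$: for each region $R(u, v) \in \mathcal{R}$, route a single representative arc from $u$ to $v$ through the interior of $R(u, v)$ and take it as an edge of $H$. Since by definition the regions of $\mathcal{R}$ have pairwise vertex-disjoint interiors (no vertex lies strictly inside more than one region) and each region is a closed disk bounded by two $u$--$v$ paths, these arcs can be drawn without crossings, so $H$ is planar and $\abs{\mathcal{R}} = \abs{\E{H}}$. If $H$ were simple, Euler's formula would immediately give $\abs{\mathcal{R}} \le 3\abs{S} - 6$, which is the $c_V = 1$ case.

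The substantive work is therefore bounding the multiplicity of parallel edges, i.e.\ the number of regions sharing a fixed anchor pair $\{u, v\}$, and this is where maximality enters. Given two regions $R_1, R_2$ with the same anchors that are consecutive in the rotation around $\{u, v\}$, the part of the plane enclosed between one boundary path of $R_1$ and one boundary path of $R_2$ is itself bounded by two $u$--$v$ paths of length at most $c_V + c_E + 1$. I would argue that this enclosed area must contain a vertex of $S$ strictly in its interior: otherwise every vertex it contains already lies within distance $c_V$ of $u$ or $v$, so the enclosed area is a legal region whose addition would enlarge $V(\mathcal{R})$ (or $R_1$ and $R_2$ could be absorbed into a single legal region), contradicting the maximality of $\mathcal{R}$. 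Charging each such gap to a distinct $S$-vertex it contains, and using that all these vertices lie within distance $c_V$ of an anchor, bounds the number of parallel regions per anchor pair by $c_V$. Multiplying the $3\abs{S} - 6$ edges of the underlying simple planar graph by this per-pair multiplicity yields the claimed bound $c_V(3\abs{S} - 6)$.

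I expect the multiplicity analysis to be the main obstacle: one must set up the gap between consecutive parallel regions carefully as a planar subregion, verify that it meets the region axioms so that maximality can be invoked, and control degenerate cases where the two boundary paths are not disjoint or simple (the definition explicitly permits this). Handling these degeneracies, together with the case analysis on whether a gap is empty, contains only already-covered vertices, or contains a fresh $S$-vertex, is the delicate part; the reduction to Euler's formula and the final arithmetic are routine once the multiplicity bound is in hand.
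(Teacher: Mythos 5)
Be aware that the paper never proves this proposition at all---it is imported verbatim from Guo and Niedermeier---so your attempt can only be measured against the argument in that cited work, whose broad strategy (a plane multigraph on $S$ with one edge per region, Euler's bound $3\abs{S}-6$, and maximality to control parallel edges) you do reproduce. The execution, however, has a genuine gap at exactly the point where the constant $c_V$ must appear. Your central claim---that a gap between consecutive parallel regions containing no $S$-vertex strictly inside must have all its vertices within distance $c_V$ of $u$ or $v$---is asserted rather than proved, and it is false for $c_V \ge 2$. The distance property only supplies, for each gap vertex, an $S$-vertex within distance $c_V$ \emph{somewhere in the graph}. When $c_V = 1$, planarity does trap that witness: a neighbour of a vertex strictly inside the gap lies in the gap or on its boundary, and the boundary paths of $R_1$ and $R_2$ carry no $S$-vertices besides $u$ and $v$. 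But when $c_V \ge 2$, a short path may cross the boundary of $R_1$ (the two boundary paths of a region need not be disjoint, so a crossing can cost as little as one edge) and reach an $S$-vertex outside the enclosed area; then the gap is not a legal region, contains no $S$-vertex, and no contradiction with maximality arises. This failure is precisely why the bound carries the factor $c_V$; indeed, if your claim were true, the standard thin-multigraph argument would yield the stronger bound $3\abs{S}-6$ for every $c_V$, which should have been a red flag.

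The second gap is the multiplicity accounting, which is a non sequitur even if one grants the claim above. One distinct $S$-vertex strictly inside each gap bounds the number of parallel regions anchored at $\{u,v\}$ by one more than the number of $S$-vertices inside the area enclosed between the first and last of them---a quantity that can be as large as $\abs{S}-2$---not by $c_V$. The supporting remark that ``all these vertices lie within distance $c_V$ of an anchor'' does not cap their number at $c_V$ either, since a vertex can have arbitrarily many $S$-vertices within distance $c_V$; so the target bound $c_V(3\abs{S}-6)$ is never actually derived, and the factor $c_V$ enters by fiat. Finally, two cases you defer as routine degeneracies are real holes: (i) a gap may contain no vertices at all, in which case adding it does not enlarge $V(\mathcal{R})$, so maximality---which, as defined, only forbids addable regions that strictly enlarge the covered vertex set---is not violated, yet there is no $S$-vertex to charge to; and (ii) the fallback of ``absorbing $R_1$ and $R_2$ into a single legal region'' contradicts nothing, because the definition permits adding a region to $\mathcal{R}$, not replacing two regions by their union, and the union itself cannot be added since vertices strictly inside $R_1$ would then lie strictly inside two regions of the enlarged family.
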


Maximal region decompositions can be constructed in linear time~\cite{GN2007ICALP}. In Section~\ref{sect:comp_reg_decomp}, we devise an algorithm that constructs region decompositions in polynomial time and space $\Oh{(c_V + c_E) \abs{S} \log n}$.

\emergencystretch=6em
\section{Approximating $\pDS$ and $\pVC$ on planar graphs in sublinear space}\label{sect:apx_dsvc}
In this section, we develop $(1 + \epsilon)$-approximation schemes for $\pDS$ and $\pVC$ that run in time $n^{\Oh{1 / \epsilon}}$ and space $\Oh{((1 / \epsilon) \log{n} + \sqrt{n}) \log{n}}$ on $n$-vertex planar graphs. Later on, we use the schemes to compute $2$-approximate solutions for these problems, to be used by the region decomposition algorithm of Section~\ref{sect:comp_reg_decomp}.

The following well-known result shows how $\pDS$ and $\pVC$ can be approximated when there is no restriction on space.

\begin{proposition}[Baker~\cite{Bak1994JACM}, A1]
  For any $0 < \epsilon < 1$, one can $(1 + \epsilon)$-approximate $\pDS$ and $\pVC$ on $n$-vertex planar graphs in time $f(1 / \epsilon)\,n^{\Oh{1}}$, where $f: \R \to \N$ is a computable, increasing function.
\end{proposition}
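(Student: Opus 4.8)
The plan is to follow Baker's layering (or \emph{shifting}) technique. First I would fix an arbitrary root $r \in \V{G}$ and compute a breadth-first-search layering, assigning each vertex $v$ to the layer $L_d$ where $d$ is its BFS distance from $r$; this partitions $\V{G}$ into layers $L_0, L_1, \dotsc, L_t$, with the key property that every edge of $G$ joins vertices in the same or adjacent layers. Set $\ell := \ceil{c / \epsilon}$ for a suitable absolute constant $c$ fixed by the analysis. The crucial structural fact is that any block of $\Oh{\ell}$ consecutive layers induces an $\Oh{\ell}$-outerplanar graph, which has treewidth $\Oh{\ell}$.

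Second, I would solve the problem optimally on each bounded-treewidth block. For a block of treewidth $w = \Oh{\ell}$ one first computes a tree decomposition of width $\Oh{w}$ in polynomial time, and then runs the standard bag-by-bag dynamic program: \pVC\ admits a $2^{\Oh{w}}\,n^{\Oh{1}}$-time DP and \pDS\ a $3^{\Oh{w}}\,n^{\Oh{1}}$-time DP. Since $w = \Oh{1 / \epsilon}$, each solve costs $c^{\Oh{1 / \epsilon}}\,n^{\Oh{1}}$, which is of the form $f(1 / \epsilon)\,n^{\Oh{1}}$.

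Third, I would assemble a global solution from the piecewise-optimal ones. For a fixed offset $j \in \brc{0, 1, \dotsc, \ell - 1}$ I would cover $G$ by overlapping slabs, where slab $s$ consists of the layers with indices $s\ell + j, s\ell + j + 1, \dotsc, (s+1)\ell + j$, so that consecutive slabs share the boundary layer whose index is $\equiv j \pmod{\ell}$ (for \pVC\ one shared layer suffices so that each edge lies inside some slab; for \pDS\ the slabs must overlap in two consecutive layers, since a vertex may be dominated by a neighbour lying in the adjacent slab). Each slab spans $\Oh{\ell}$ layers, hence is handled exactly by the DP above, and the union $S_j$ of the per-slab solutions is feasible for $G$: every edge, and analogously every domination requirement, is contained inside some slab.

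The approximation guarantee follows from an averaging argument, which I expect to be the main point requiring care. Writing $V_s$ for the vertex set of slab $s$ and using that each slab is solved optimally, $\abs{S_j} \le \sum_s \abs{\text{OPT} \cap V_s}$; the slabs are disjoint except on the shared boundary layers, so this sum equals $\abs{\text{OPT}}$ plus the contribution of $\text{OPT}$ on the boundary layers for offset $j$, counted once extra. Since each layer serves as a boundary for only a constant number of offsets, summing over the $\ell$ choices of $j$ makes the total extra contribution $\Oh{\abs{\text{OPT}}}$; hence some offset $j^{*}$ incurs extra at most $\abs{\text{OPT}} / \ell \le \epsilon\,\abs{\text{OPT}}$, giving $\abs{S_{j^{*}}} \le (1 + \epsilon)\,\abs{\text{OPT}}$. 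Returning the best solution over all $\ell$ offsets therefore achieves a $(1 + \epsilon)$-approximation, and the running time is $\ell$ times the per-slab cost, i.e.\ $f(1 / \epsilon)\,n^{\Oh{1}}$ with $f(1 / \epsilon) = c^{\Oh{1 / \epsilon}}$.
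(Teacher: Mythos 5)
Your overall route---BFS layering, exact dynamic programming on bounded-treewidth slabs, and shifting over the $\ell$ offsets---is exactly Baker's technique, and it is also the approach the paper itself takes in its restricted-memory analogue (Theorem~\ref{thrm:apx_plan_dsvc}). For \pVC{} your argument is sound as written: $\mathrm{OPT} \cap V_s$ covers every edge of the slab graph $G[V_s]$, so per-slab optimality really does give $\abs{S_j} \le \sum_s \abs{\mathrm{OPT} \cap V_s}$, and the averaging over offsets finishes the proof.

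For \pDS{}, however, there is a genuine gap at the step $\abs{S_j} \le \sum_s \abs{\mathrm{OPT} \cap V_s}$. That inequality needs $\mathrm{OPT} \cap V_s$ to be a feasible dominating set of the slab graph $G[V_s]$, and it need not be: a vertex in the outermost layer of a slab can have its only $\mathrm{OPT}$-dominator in the layer just outside the slab. Overlapping consecutive slabs by two layers, as you propose, does make the union $S_j$ feasible (every vertex has its whole closed neighbourhood inside some slab), but it does not repair the inequality, because your DP still demands that every vertex of the slab---including the outermost layers---be dominated from within the slab. Concretely, plant at each layer $r$ a set of many pairwise non-adjacent vertices in layer $r+1$ whose unique neighbour is a single vertex in layer $r$: whenever layer $r+1$ is the bottom layer of a slab, these vertices are isolated in that slab and must all be taken, and since such gadgets sit at every layer, every offset $j$ fails, so the output can be arbitrarily far from $(1+\epsilon)\,\abs{\mathrm{OPT}}$. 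The standard repair (and what Baker actually does for domination-type problems) is to solve on each slab an \emph{annotated} variant: pad the slab by one extra layer on each side, let padding vertices serve as dominators, but require domination only of the non-padding layers. Then $\mathrm{OPT} \cap V_s$ is feasible for the annotated problem on slab $s$, the union of the per-slab solutions still dominates all of $G$ because every vertex lies in the responsible (non-padding) part of some slab, and your averaging argument goes through verbatim. You are in good company here: the paper's own proof of Theorem~\ref{thrm:apx_plan_dsvc} asserts without justification that $S^* \cap \V{G_i}$ dominates $G_i$, which is the same unwarranted step; the annotated DP is needed there as well.
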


The main result of this section provides a restricted-memory equivalent of the above proposition.

\emergencystretch=4em
\begin{theorem}\label{thrm:apx_plan_dsvc}
  For any $0 < \epsilon < 1$, one can $(1 + \epsilon)$-approximate $\pDS$ and $\pVC$ on $n$-vertex planar graphs in space $\Oh{((1 / \epsilon) \log{n} + \sqrt{n}) \log{n}}$ and time $n^{\Oh{1 / \epsilon}}$.
\end{theorem}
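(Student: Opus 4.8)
The plan is to realize Baker's layering technique (the preceding proposition of Baker~\cite{Bak1994JACM}) in restricted memory, replacing every persistent data structure by on-demand recomputation. First I would fix a root $r$ and work with the BFS layering $L_0, L_1, \dotsc$ of $G$, where $L_i = \setb{v}{\oper{d}(r, v) = i}$. Storing this layering would cost $\Om{n}$ bits, so instead I compute the level $\oper{d}(r, v)$ of a single queried vertex $v$ on demand, using a space-efficient planar BFS/distance subroutine to which I allot $\Oh{\sqrt{n}\log{n}}$ bits of scratch; this (together with the analogous space-efficient planar connectivity and separator-extraction primitives used below) is the source of the $\sqrt{n}\log{n}$ term. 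Setting $p = \ceil{1/\epsilon} + 1$, for each shift $s \in \brc{0, \dotsc, p-1}$ I conceptually delete the layers $L_i$ with $i \equiv s \pmod p$; the surviving graph $G_s$ is a disjoint union of blocks, each spanning at most $1/\epsilon$ consecutive layers (with the usual one-layer overlap retained so that domination across deleted layers is preserved), hence each block is $\Oh{1/\epsilon}$-outerplanar and therefore of treewidth $w = \Oh{1/\epsilon}$.

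The core of the argument is to solve $\pDS$ (resp.\ $\pVC$) optimally on each treewidth-$w$ block within the space budget, which I would do by a separator-based divide-and-conquer that \emph{memoizes nothing}. A width-$w$ graph admits a recursive balanced-separator decomposition in which every subproblem is a subgraph together with a boundary set of $\Oh{1/\epsilon}$ vertices (the role played by a bag of a depth-$\Oh{\log{n}}$, width-$\Oh{1/\epsilon}$ tree decomposition), and the small balanced separators can be extracted from the explicit layered structure of the $\Oh{1/\epsilon}$-outerplanar block using the planar primitives above. At each subproblem I enumerate the $2^{\Oh{1/\epsilon}}$ admissible boundary states of its $\Oh{1/\epsilon}$-vertex interface (for $\pVC$: each interface vertex in or out of the cover; for $\pDS$: in the solution, already dominated, or still to be dominated from the outside), and for each state recurse on the (binarized) children, combining the returned minimum costs. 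Because the separators are balanced, the recursion has depth $\Oh{\log{n}}$, so at any instant only the interfaces and states along the current root-to-leaf path reside in memory: $\Oh{\log{n}}$ interfaces of $\Oh{1/\epsilon}$ vertices each, i.e.\ $\Oh{(1/\epsilon)\log{n}}$ vertex identifiers, accounting for the $\Oh{(1/\epsilon)\log^2{n}}$ term. Since nothing is stored, every branch recomputes its subproblems from scratch, and multiplying the $2^{\Oh{1/\epsilon}}$ branching per level by the $\Oh{\log{n}}$ depth yields exactly the $n^{\Oh{1/\epsilon}}$ running time.

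Finally I would tie the shifts together. For each shift $s$ I sum the block optima to obtain the optimal cost on $G_s$ and track only the best cost over the $p = \Oh{1/\epsilon}$ shifts; by the standard analysis underlying Baker's proposition, each vertex (edge) is discarded in only one residue class, so the best shift loses at most a $(1 + \epsilon)$ factor and hence yields a $(1 + \epsilon)$-approximation. To emit an actual solution rather than its cost without storing it, I re-run the separator recursion for the winning shift and stream each chosen vertex to the output tape at the moment its membership is fixed, so the solution never resides fully in work memory.

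The step I expect to be the main obstacle is solving the bounded-treewidth blocks optimally inside the space budget: the textbook tree-decomposition dynamic program holds $2^{\Om{w}}$-sized tables at $\Om{n}$ bags simultaneously, which is $\Om{n}$ bits and thus inadmissible. The resolution---and precisely the reason the running time degrades from Baker's $f(1/\epsilon)\,n^{\Oh{1}}$ to $n^{\Oh{1/\epsilon}}$---is the recompute-don't-store separator recursion above, which keeps only an $\Oh{(1/\epsilon)\log{n}}$-vertex frontier in memory at the cost of recomputing each subproblem along every branch.
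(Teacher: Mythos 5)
Your proposal is correct and follows essentially the same route as the paper's proof: a space-efficient planar BFS layering within $O(\sqrt{n}\log n)$ bits, Baker-style shifts producing overlapping strips of $O(1/\epsilon)$ consecutive levels and hence treewidth $O(1/\epsilon)$, an exact solve on each strip by a memoization-free recursion over a depth-$O(\log n)$, width-$O(1/\epsilon)$ decomposition (trading recomputation for space, giving $n^{O(1/\epsilon)}$ time and $O((1/\epsilon)\log^2 n)$ bits), and a re-run of the winning shift to stream the solution. The one step you assert rather than establish---that the balanced-separator recursion can keep every subproblem's boundary at $O(1/\epsilon)$ vertices, which naive separator recursion does \emph{not} do since boundaries accumulate across $O(\log n)$ levels---is precisely what the paper outsources to the cited result of Elberfeld et al.\ (a width-$O(k)$, depth-$O(\log n)$ binary tree decomposition computable in $O(k\log n)$ bits and $n^{O(k)}$ time), so your outline is sound modulo supplying that citation or its proof.
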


In what follows, we describe our approach for $\pDS$, which can be readily adapted to $\pVC$. One only needs to make slight changes to the procedures in Lemmas~\ref{lemm:plan_partition} and~\ref{lemm:plan_bd_width}. We defer the details to the full version of this paper, but the reader could briefly consult~\cite{Bak1994JACM} to learn about the changes necessary.

To begin with, we show how to decompose an input planar graph $G$ into subgraphs that together cover all of $G$. For this, we need the following components.

\begin{proposition}[Chakraborty and Tewari~\cite{CT2015report}, Theorem 1]\label{prop:plan_bfs}
    One can compute a BFS traversal for any $n$-vertex planar graph in polynomial time and space $\Oh{\sqrt{n} \log{n}}$.
\end{proposition}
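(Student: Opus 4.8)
The plan is to reduce a BFS \emph{traversal} to the computation of the single-source distance function $d(s, \cdot)$ in the unweighted input graph: in an unweighted graph the BFS level of a vertex is exactly its distance from the source, so once I can report $d(s, v)$ for any queried $v$ within the space budget, I can stream out the vertices in BFS order (and, via a canonical parent-selection rule among neighbours of strictly smaller level, the BFS tree) level by level using only $\Oh{\log n}$ extra bits of bookkeeping. Thus the whole task reduces to answering distance queries $d(s, v)$ in $\Oh{\sqrt{n} \log{n}}$ bits.

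For the distance computation I would recurse on an $\Oh{\sqrt{n}}$-size planar separator (the theorem of Lipton and Tarjan). At the top level, compute a separator $\Sigma$ with $\abs{\Sigma} = \Oh{\sqrt{n}}$ whose removal leaves components $C_1, \dotsc, C_t$ each of size at most $2n/3$. The structural observation that drives everything is that any shortest $s$--$v$ path, when cut at its crossings of $\Sigma$, decomposes into segments each lying entirely inside a single augmented piece $G[C_i \cup \Sigma]$. Consequently the distances from $s$ to the separator vertices coincide with distances in a small weighted \emph{skeleton} graph $H$ on vertex set $\Sigma \cup \brc{s}$, where a pair $(u, w)$ sharing a piece gets weight equal to its in-piece distance; and for an arbitrary $v \in C_i$ one has $d(s, v) = \min_{\sigma \in \Sigma}\brn{d_H(s, \sigma) + d_{G[C_i \cup \Sigma]}(\sigma, v)}$, together with the case where $s$ and $v$ already share a piece. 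The in-piece distances are obtained by recursing on the pieces, whose sizes shrink by a constant factor.

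Space is the crux of the argument. I would execute the recursion depth-first, holding the skeleton data of only the current root-to-node path at any moment, and \emph{recompute} subproblem answers on demand rather than caching them. At a recursion node handling $m$ vertices I store only its separator ($\Oh{\sqrt{m}}$ vertices) together with one $\Oh{\log n}$-bit tentative distance per separator vertex during the skeleton shortest-path relaxation; since the piece sizes along any path decrease geometrically, the separator sizes sum to $\Oh{\sqrt{n}}$ and the total live storage is $\Oh{\sqrt{n} \log{n}}$ bits. Recomputation costs only a polynomial blow-up in time, which is within budget.

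I expect the main obstacle to be twofold. First, the separator itself must be produced within $\Oh{\sqrt{n} \log{n}}$ bits, so I would invoke a space-efficient separator construction (or fold it into the same recompute-don't-store discipline) rather than the classical linear-space one. Second, and more delicate, evaluating the skeleton distances $d_H(s, \cdot)$ is a shortest-path computation on up to $\Oh{\sqrt{n}}$ vertices whose edge weights are themselves the outputs of recursive calls; the skeleton could have $\Thet{n}$ edges, so it must never be materialised in full. The careful part is interleaving a relaxation-style min-plus computation at each level with on-demand regeneration of individual edge weights by recursion, while keeping only the $\Oh{\sqrt{m}}$ tentative labels of the currently active level resident. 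Controlling this interplay — shortest paths over a conceptually dense small graph whose edges are expensive oracle calls — is where the real time-space accounting lives.
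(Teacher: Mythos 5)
First, a point of comparison: the paper never proves this proposition at all --- it is imported wholesale from Chakraborty and Tewari, and the paper's entire justification is the accompanying remark that a planar \emph{distance}-computation procedure can be converted into a BFS traversal with polynomial-time and logarithmic-space overhead. Your first paragraph is exactly that reduction, and it is fine: in an unweighted graph the BFS level of a vertex is its distance from the source, and a canonical parent-selection rule needs only $\Oh{\log{n}}$ extra bits.

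The remainder of your proposal, however, attempts to reprove the cited distance theorem itself, and there the gaps are genuine. (1) Circularity and an assumed black box: the classical Lipton--Tarjan separator construction is itself built on a BFS tree (BFS layers plus a fundamental cycle of a spanning tree), so it cannot be invoked inside an algorithm whose very purpose is to compute BFS; moreover, no classical construction runs in $\Oh{\sqrt{n} \log{n}}$ bits. Producing a separator (or a weaker pseudo-separator) in sublinear space and polynomial time is precisely the central technical contribution of this line of work (Imai et al., Asano et al., and the cited Chakraborty--Tewari report); your parenthetical ``invoke a space-efficient separator construction'' assumes away the heart of the theorem. (2) The claimed polynomial running time does not follow from the recompute-don't-store discipline. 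A recursion node on $m$ vertices evaluates skeleton edge weights via recursive calls; the skeleton on $\Oh{\sqrt{m}}$ vertices can have $\Thet{m}$ edges, and a relaxation-style shortest-path computation makes polynomially many such evaluations, each triggering a call one level down. Since the pieces shrink only by a constant factor, the recursion depth is $\Thet{\log{n}}$, so a recurrence of the form $T(m) = \mathrm{poly}(m) \cdot T(2m/3)$ resolves to $n^{\Thet{\log{n}}}$ --- quasipolynomial, not polynomial. The published algorithms avoid exactly this blow-up through a more careful recursion structure; you flag this interplay as ``where the real time-space accounting lives'' but do not resolve it. So as a \emph{use} of the proposition --- which is all the paper does --- your first paragraph suffices; as a blind \emph{proof} of it, the proposal does not go through.
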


\begin{remark}
  Theorem 1 in~\cite{CT2015report} shows how to compute distances between vertices in planar graphs, but such a procedure can be used in a straightforward manner (with polynomial time and logarithmic space overheads) to produce BFS traversals as well.
\end{remark}

\begin{proposition}[Rei2008JACM]\label{prop:ustcon}
  One can determine whether two vertices in a graph belong to the same connected component in polynomial time and logarithmic space.
\end{proposition}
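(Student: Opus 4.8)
The plan is to follow Reingold's approach for placing undirected connectivity in logarithmic space. The statement is equivalent to deciding, given vertices $s$ and $t$, whether they lie in the same connected component. The central difficulty is that an arbitrary component may have linear diameter, so a naive search that stores a search frontier or a visited set cannot fit in $\Oh{\log n}$ bits. The remedy is to transform each connected component into an \emph{expander}---a graph with constant spectral gap and hence diameter $\Oh{\log n}$---while (i) preserving which vertices share a component, and (ii) keeping the transformation computable in logarithmic space. In an expander, connectivity can then be decided by brute-force enumeration of short walks.

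First I would reduce to constant-degree regular graphs: each vertex of degree $d$ is replaced by a cycle on $d$ \enquote{cloud} vertices, one per incident edge, and self-loops pad the result to be $D$-regular for a fixed constant $D$. This gadget preserves connected components, and the $i$-th neighbour of any vertex can be reported on the fly from the input, so the transformed graph need never be stored. The main tool is the \emph{zig-zag product} $G \odot H$ of a large $D$-regular graph $G$ with a fixed constant-size expander $H$ on $D$ vertices: it lives on $\abs{\V{G}} \cdot D$ vertices, has constant degree, and has a strictly smaller second eigenvalue than $G$ whenever $H$ is a good expander. Starting from the $D$-regular graph above, I would iterate
\[
  G_{i+1} = \brn{G_i}^{c} \odot H,
\]
where raising to a constant power $c$ first amplifies the spectral gap and the subsequent zig-zag pulls the degree back to $D$ at only a small cost to expansion. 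A telescoping estimate on the second eigenvalue shows that after $k = \Oh{\log n}$ steps every component of $G_k$ is an expander with constant spectral gap, hence of diameter $\Oh{\log n}$. Each $G_k$ still has $\abs{\V{G_0}} \cdot D^{\Oh{\log n}} = n^{\Oh{1}}$ vertices because $D$ is constant, and components of $G_0$ map to components of $G_k$, so connectivity is preserved.

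The key to logspace computability is that each product is described by a \emph{rotation map}, which given a vertex and an incoming edge-label returns the neighbour reached together with the label of the traversed edge as seen from that neighbour. The rotation map of $G_{i+1}$ can be evaluated from that of $G_i$ and the hardwired rotation map of $H$ using only $\Oh{1}$ extra bits. Evaluating the rotation map of $G_k$ thus unwinds into a recursion of depth $k = \Oh{\log n}$, each level holding constant state, for a total of $\Oh{\log n}$ bits---so neighbours in $G_k$ are computable without materialising the graph. Finally I would exploit the bounded diameter: since $G_k$ is $\Oh{1}$-regular with components of diameter $\Oh{\log n}$, the images of $s$ and $t$ lie in the same component of $G_k$ (equivalently, of the original graph) if and only if some walk of length $L = \Oh{\log n}$ from $s$ reaches $t$. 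I would enumerate all $D^{L} = n^{\Oh{1}}$ edge-label sequences and, for each, simulate the walk via the logspace rotation map, keeping only the current vertex and a step counter; accepting iff some walk hits $t$ yields a polynomial-time, $\Oh{\log n}$-space procedure.

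The main obstacle is the spectral analysis of the iterated construction: proving that a constant number of powering-and-zig-zag steps strictly improves expansion, and that $\Oh{\log n}$ iterations suffice to reach a constant spectral gap, requires the eigenvalue bound for the zig-zag product together with a careful telescoping argument. Verifying logspace computability of the rotation map---tracking that the recursion depth equals the iteration count and that each level costs $\Oh{1}$ bits---is comparatively routine but must be done precisely to stay within the $\Oh{\log n}$ budget.
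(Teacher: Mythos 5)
The paper offers no proof of this proposition at all: it is cited as a black box from Reingold's JACM 2008 result that undirected connectivity is in logspace. Your sketch is a faithful reconstruction of Reingold's own argument (degree reduction via vertex-to-cycle gadgets, iterated zig-zag/powering to turn every component into a constant-degree expander of diameter $\Oh{\log n}$, logspace evaluation of the recursive rotation map, and exhaustive enumeration of the polynomially many short walks), so it is correct and takes essentially the same approach as the cited source.
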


The procedure in the lemma below performs a BFS traversal to organize $G$---trivially connecting all components using a single vertex---into levels $L_1, \dotsc, L_h$ and then produces subgraphs $G_i$ ($i \in [l + 2]$) which roughly correspond to $k$ ($k \in \N$, freely selectable) consecutive levels of the BFS traversal.

\begin{lemma}\label{lemm:plan_partition}
  Let $1 \leq j \leq d < h$ be arbitrary and set $l = \floor{(h - j) / d}$. One can compute a sequence $G_1, \dotsc, G_{l + 2}$ of subgraphs of $G$ such that
  \begin{itemize}
    \item $G_1 = G[L_1 \cup \dotsb L_j]$,
    \item for $1 < i < l + 2$, $G_i = G[L_{j + (i - 2)d} \cup \dotsb \cup L_{j + (i - 1)d}]$, and
    \item $G_{l + 2} = G[L_{j + ld} \cup \dotsb \cup L_h]$.
  \end{itemize}

  The procedure runs in polynomial time and uses space $\Oh{\sqrt{n} \log{n}}$.
\end{lemma}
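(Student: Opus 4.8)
The plan is to realize the promised decomposition without ever storing the level assignment explicitly, using the planar BFS of Proposition~\ref{prop:plan_bfs} as a \emph{level oracle} that returns the BFS level of a queried vertex on demand, and then emitting each induced subgraph $G_i$ by a single scan over the (read-only) input. The key observation is that every $G_i$ is the subgraph of $G$ induced by a contiguous range of BFS levels, so once we can test, for any vertex, whether its level lies in a prescribed range, the rest is bookkeeping that fits in $\Oh{\log n}$ bits.

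First I would handle the fact that $G$ may be disconnected. Following the remark preceding the lemma, I augment $G$ to a connected planar graph $G'$ by adding a single root vertex $r$ and joining it to exactly one vertex per connected component --- concretely, to the minimum-index vertex of each component. This keeps the graph planar: $r$ is a cut vertex whose incident edges are bridges, so every block of $G'$ is either a block of $G$ or a single edge, and since planarity is decided block by block, $G'$ is planar. Crucially, $G'$ need not be materialized. I present it to the BFS procedure through an adjacency oracle layered over the input: $u$ and $w$ are adjacent in $G'$ iff they are adjacent in $G$, or one of them is $r$ and the other is a component representative. Testing whether a vertex $v$ is the representative of its component amounts to checking, via the connectivity procedure of Proposition~\ref{prop:ustcon}, that no smaller-index vertex is connected to $v$; this costs $\Oh{\log n}$ space and polynomial time per query. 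Running the procedure of Proposition~\ref{prop:plan_bfs} on $G'$ from source $r$ then yields, for any $v$, its level $\ell(v) = d_{G'}(r, v)$ in polynomial time and $\Oh{\sqrt{n} \log n}$ space; a single scan computing $\max_v \ell(v)$ fixes $h$ and hence $l = \floor{(h - j)/d}$.

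With the oracle in hand, the enumeration is routine. For each $i \in [l + 2]$ I compute, by elementary arithmetic on $i, j, d, l, h$ (each an $\Oh{\log n}$-bit quantity), the level interval $[a_i, b_i]$ defining $G_i$: namely $[1, j]$ for $i = 1$, then $[j + (i - 2)d,\, j + (i - 1)d]$ for $1 < i < l + 2$, and $[j + ld,\, h]$ for $i = l + 2$; note that consecutive intervals overlap on their shared boundary level, exactly as required. To emit $G_i$ I first scan all vertices, outputting each $v$ with $a_i \leq \ell(v) \leq b_i$, and then scan the edges of $G$, outputting $uv$ whenever both endpoints satisfy this condition. Since $G_i$ is the induced subgraph on the selected levels, this produces precisely $G_i$, and because the output is write-only, each $G_i$ is emitted in one forward pass and never re-read.

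For the space bound, the outer index $i$ and the interval endpoints occupy $\Oh{\log n}$ bits, the vertex and edge scan counters another $\Oh{\log n}$, and each oracle call reuses the same $\Oh{\sqrt{n} \log n}$-bit workspace, so the total is $\Oh{\sqrt{n} \log n}$. The running time is polynomial, as we make polynomially many oracle calls, each itself polynomial. The main obstacle I expect is precisely this space-bounded augmentation: one must simulate the connected planar graph $G'$ entirely through oracles --- preserving planarity so that Proposition~\ref{prop:plan_bfs} remains applicable, while detecting component representatives using only the logarithmic-space connectivity test of Proposition~\ref{prop:ustcon} --- rather than by constructing $G'$ in memory, which would already exceed the space budget.
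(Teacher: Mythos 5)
Your proposal is correct and takes essentially the same route as the paper's own proof: both make $G$ connected by attaching a new root vertex to one representative vertex per connected component (identified via the logspace connectivity test of Proposition~\ref{prop:ustcon}), run the planar BFS of Proposition~\ref{prop:plan_bfs} from that root to obtain the levels $L_1, \dotsc, L_h$, and then emit each $G_i$ as the subgraph induced by a contiguous range of levels, all within $\Oh{\sqrt{n} \log{n}}$ space and polynomial time. The only difference is presentational: you expose the augmented graph $G'$ through an adjacency oracle (and spell out the planarity preservation and the induced-subgraph emission), whereas the paper feeds $G'$ to the BFS as a recomputable intermediate output stream --- equivalent realizations of the same space-bounded composition.
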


\begin{remark}\label{rmrk:graph_strips}
  The graphs $G_1, \dotsc, G_{l + 2}$ all have diameter at most $d$, together cover all of $G$, and any two graphs have at most a single level of the BFS traversal in common.
\end{remark}

\begin{proof}
	Let ${v_1, \dotsc, v_n}$ be the vertex set of $G$ and $\brc{e_1, \dotsc, e_m}$ be the edge set of $G$. Consider the following procedure.

	\textbf{Adding a dummy vertex.} This step produces a connected graph $G'$ equivalent, for our purposes, to $G$. Determine the connected components of $G$ using the procedure in Proposition~\ref{prop:ustcon}: for any two vertices, it determines connectivity in logarithmic space. Then add a dummy vertex $v_{n + 1}$ which has an edge to the first vertex (in the ordering $v_1, \dotsc, v_n$) in each connected component, making the resulting graph $G'$ connected. The edge additions are implicit, meaning they are written to the output stream on-the-fly.
  
  Now output $G'$, and denote this output stream by $S_{G'}$. With random access to $G$, it is not hard to see that this transformation runs in time $n^{\Oh{1}}$ and space $\Oh{\sqrt{n} \log{n}}$.

	\textbf{Decomposing $G'$.}
	Using the procedure in Proposition~\ref{prop:plan_bfs}, perform a BFS traversal of $G'$ (using $S_{G'}$) starting at $v_{n + 1}$. Suppose the depth of the traversal is $h$. For $i \in [h]$, set $L_i = \setb{v \in \V{G'}}{\oper{dist}(v_{n + 1}, v) = i}$. Observe that $L_1, \dotsc, L_h$ are precisely the levels of the BFS tree and $\V{G} = L_1 \cup \dotsb \cup L_h$.

  Now set $l = \floor{(h - j) / id}$, and write 
  \begin{itemize}
    \item $G_1 = G[L_1 \cup \dotsb L_j]$,
    \item for $1 < i < l + 2$, $G_i = G[L_{j + (i - 2)d} \cup \dotsb \cup L_{j + (i - 1)d}]$, and
    \item $G_{l + 2} = G[L_{j + ld} \cup \dotsb \cup L_h]$
  \end{itemize}
  to the output stream.

  To prove the claim, one only needs to show that the resource bounds hold. Observe that the two steps involve polynomial-time computations. The first step uses space $\Oh{\sqrt{n} \log{n}}$ and the second, because it consists in simply scanning the BFS traversal of $G'$, uses logarithmic space. The output of the first step is used as input for the second, so the overall resource bounds are polynomial time and $\Oh{\sqrt{n} \log{n}} + \Oh{\log{n}} = \Oh{\sqrt{n} \log{n}}$ space.
\end{proof}

We now show how one can compute, for each $G_i$ ($i \in [l]$), a tree decomposition of width $(12d + 5)$. We use the following results to achieve this goal.

\begin{proposition}[Robertson and Seymour~\cite{RS1984JCTB}, Theorem 2.7]\label{prop:local_tw}
	The treewidth of any planar graph with diameter $d$ is at most $3d + 1$.
\end{proposition}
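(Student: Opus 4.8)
The plan is to derive the bound from the breadth-first layering of $G$ together with the classical treewidth bound for $k$-outerplanar graphs. First I would fix an arbitrary root vertex $r_0$ and run BFS from it, writing $L_i$ for the set of vertices at distance exactly $i$ from $r_0$. Since $G$ has diameter $d$, the eccentricity of $r_0$ is at most $d$, so the only nonempty layers are $L_0, L_1, \dotsc, L_d$, giving at most $d+1$ layers. Crucially, every edge of $G$ joins two vertices in the same layer or in consecutive layers, so the layering is well adapted to the planar structure in the sense needed below.

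Next I would pass from this metric layering to a topological one. Fixing a planar embedding of $G$, recall the onion-peeling layering underlying $k$-outerplanarity: $V_1$ is the set of vertices incident to the outer face, $V_2$ the set incident to the outer face of $G - V_1$, and so on, with $G$ being $k$-outerplanar when this process exhausts the graph after $k$ rounds. The key step is to show that one can choose the embedding so that the BFS layers and the onion-peeling layers are compatible, i.e.\ that $G$ is $(d+1)$-outerplanar. Intuitively each BFS layer $L_i$ sits inside the region enclosed by $L_{i-1}$, so peeling from the outside inward terminates in at most $d+1$ rounds; making this precise (and handling vertices that a single BFS layer may split across several faces) is the part that needs care.

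I would then invoke the standard fact that every $k$-outerplanar graph has treewidth at most $3k-1$, which is proved by induction on $k$: the outermost layer forms a cyclic boundary, and a tree decomposition of width $3(k-1)-1$ of the inner $(k-1)$-outerplanar graph can be augmented by adding at most three vertices of the outer layer to each bag. Substituting $k = d+1$ yields treewidth at most $3(d+1)-1 = 3d+2$; to sharpen this to the claimed $3d+1$ I would exploit that the innermost layer $L_0 = \{r_0\}$ is a single vertex and hence contributes only one vertex (rather than three) to the relevant bags, tightening the induction by at least one in its final step.

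The main obstacle is the correspondence in the second paragraph: the BFS layering is a purely metric notion governed by the diameter, whereas $k$-outerplanarity is an embedding notion, and the two need not coincide for an arbitrary embedding. The technical heart of the argument is therefore to produce an embedding realizing onion-peeling depth $d+1$, for which one uses that consecutive BFS layers form nested separating structures in the plane. A secondary nuisance is the bookkeeping of additive constants, where one must argue carefully to land exactly on $3d+1$ rather than on a slightly weaker bound; disconnected inputs cause no difficulty, since treewidth is the maximum over connected components and the argument applies to each separately.
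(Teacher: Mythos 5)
You should first note that the paper does not prove this proposition at all: it is a black-box citation of Robertson and Seymour, so there is no in-paper argument to compare against. Robertson and Seymour's own proof is a direct construction (bags built from root-paths of a BFS tree in a triangulation), whereas you go through Baker's layering and Bodlaender's bound that $k$-outerplanar graphs have treewidth at most $3k-1$. That route is viable, but as written your plan has two genuine gaps.

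First, the step you yourself call the technical heart---that a connected planar graph whose BFS layering $L_0, \dots, L_d$ from $r_0$ has depth $d$ admits an embedding in which it is $(d+1)$-outerplanar---is never actually proved, and the intuition you offer (``each layer sits inside the region enclosed by the previous one'', ``nested separating structures'') is not how the argument goes: BFS layers need not be cycles and need not enclose or separate anything. The correct argument is: choose the embedding so that $r_0$ lies on the outer face; for each $i$, the set $L_0 \cup \dots \cup L_{i-1}$ induces a connected subgraph meeting the outer face, so deleting it merges every face incident to it into the outer face of $G[L_i \cup \dots \cup L_d]$; since every vertex of $L_i$ has a neighbour in $L_{i-1}$, all of $L_i$ then lies on that outer face, and it stays on the outer face under any further vertex deletions (deletions only merge faces into the unbounded one). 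Hence peeling round $i+1$ removes whatever remains of $L_i$, and the peeling ends within $d+1$ rounds. Second, your constant is off, and the proposed repair rests on a mix-up: with $r_0$ on the outer face, $L_0$ is the \emph{outermost} layer, not the innermost, so ``the innermost layer is a single vertex'' cannot tighten the induction. As written, your plan only delivers $3(d+1)-1 = 3d+2$, weaker than the stated $3d+1$ (harmless for this paper's use of the proposition, but not the claim). A clean fix within your framework: the argument above applied to $G - r_0$ shows that it is $d$-outerplanar, hence has treewidth at most $3d-1$; adding $r_0$ to every bag of such a decomposition raises the width by at most one, giving treewidth at most $3d \le 3d+1$.
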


\begin{proposition}[Elberfeld et al.\cite{EJT2010FOCS}, Lemma III.1]\label{prop:tree_dec}
	Let $G$ be a graph on $n$ vertices with treewidth $k \in \N$. One can compute a tree decomposition of width $4k + 1$ for $G$ such that the decomposition tree is a rooted binary tree of depth $\Oh{\log{n}}$. The procedure runs in time $n^{\Oh{k}}$ and uses $\Oh{k \log{n}}$ bits of space.
\end{proposition}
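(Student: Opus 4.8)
The plan is to build the decomposition by recursively splitting $G$ along \emph{balanced separators} of size at most $k+1$, which every treewidth-$k$ graph admits, and to make the whole thing fit in $\Oh{k \log n}$ space by never storing the recursion stack: instead, each bag is recomputed on demand from the $\Oh{\log n}$-bit name of its node. The separator search itself is a brute-force enumeration that leans on the logspace connectivity algorithm of Proposition~\ref{prop:ustcon}.

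First I would nail down the combinatorial backbone. Rooting a width-$k$ tree decomposition and taking a centroid bag shows that, for \emph{any} vertex weighting, $G$ has a separator $S$ with $\abs{S} \le k+1$ whose removal leaves every component of weight at most half the total. I would work inside a \emph{piece}: a vertex set $V_t$ together with a boundary $\partial_t$ inherited from ancestors. Weighting the boundary vertices heavily, a single balanced separator simultaneously halves the interior of the piece (controlling depth) and halves the boundary (controlling width); grouping the resulting light components into two bundles produces the two children. A short induction then shows $\abs{\partial_t} \le 3k+1$ is maintained, so each bag $\partial_t \cup S_t$ has size at most $4k+2$, giving width $4k+1$, while halving the interior at every level forces the binary decomposition tree to have depth $\Oh{\log n}$.

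Second I would argue the separator step is cheap. Given the current piece, identified by its boundary $\partial_t$ and a single representative vertex, I enumerate all $n^{\Oh{k}}$ subsets $S$ of size at most $k+1$ and, for each, check whether every component of the piece minus $S$ is light. Both ``does $v$ lie in the current piece?'' and the component structure of the piece minus $S$ reduce to connectivity queries in the graph with $\partial_t \cup S$ deleted, which run in logarithmic space by Proposition~\ref{prop:ustcon}. Holding one candidate $S$ costs $\Oh{k \log n}$ bits and the weight counters cost $\Oh{\log n}$ bits, so the search runs in time $n^{\Oh{k}}$ and space $\Oh{k \log n}$.

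Third---and this is where the real care lies---I would assemble these pieces into a global $\Oh{k \log n}$-space emitter. Each tree node is named by its $\Oh{\log n}$-bit path from the root; to output the bag at a node I walk that path from the root while maintaining \emph{only} the current boundary $\partial$ (size $\Oh{k}$) and a representative vertex of the current piece, running the separator search at each step and updating $\partial$ and the representative according to the next path bit. No ancestor data is retained, so the space stays $\Oh{k \log n}$ despite the $\Oh{\log n}$ depth. The main obstacle is exactly this history-free, implicit description of pieces: I must show that a piece is pinned down by its current boundary together with one representative vertex---namely, as a union of connected components of $G$ with $\partial$ removed---so that membership in $V_t$ is a logspace connectivity question and the separator search can be replayed without ever materializing $V_t$. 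Iterating the emitter over all $\Oh{n}$ nodes, each via an $\Oh{\log n}$-length walk of $n^{\Oh{k}}$-time searches, streams out the full width-$4k+1$, depth-$\Oh{\log n}$ decomposition in total time $n^{\Oh{k}}$ and space $\Oh{k \log n}$.
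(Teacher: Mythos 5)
Your proposal concerns a statement that the paper itself never proves: Proposition~\ref{prop:tree_dec} is imported wholesale from Elberfeld et al.~\cite{EJT2010FOCS}, so there is no in-paper proof to compare against, and I judge your argument on its own terms. Your overall architecture is the right one for this kind of result --- brute-force search for size-$(k+1)$ balanced separators using logspace connectivity, a recursion on pieces $(V_t,\partial_t)$, and a history-free emitter that recomputes each bag by replaying the $O(\log n)$-bit root-to-node path instead of storing a recursion stack. That is exactly the flavour of argument behind the cited lemma. The problem is that the single combinatorial step driving both the depth bound and the width bound is false as stated.

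You claim that one separator which is $1/2$-balanced for a ``boundary-heavy'' weighting simultaneously halves the boundary and the interior. No single scalar weighting can certify both halvings: let $B_w$ and $I_w$ denote the total weight placed on the boundary and on the interior. If $B_w \geq I_w$, then a component containing \emph{all} of the interior and none of the boundary has weight $I_w \leq (B_w+I_w)/2$ and thus passes the lightness test; if $B_w < I_w$, a component containing all of the boundary passes it. Concretely, take $G$ a path $v_1v_2\dotsm v_n$ (treewidth $1$), a piece with boundary $\partial=\{v_2\}$ and interior $\{v_3,\dotsc,v_n\}$, weight $H>n$ on $v_2$ and $1$ on interior vertices: then $S=\{v_2\}$ passes your test (indeed every valid separator must contain $v_2$, and $\{v_2\}$ is the one a lexicographic-first search returns), yet its unique component contains the entire interior. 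So the recursion can peel off only $O(k)$ vertices per level, the tree depth becomes $\Theta(n/k)$ rather than $O(\log n)$, and with it both the ``rooted binary tree of depth $O(\log n)$'' conclusion and your $O(k\log n)$-space emitter (whose node names and replay walks must have length $O(\log n)$) collapse. The standard repairs are either to demand lightness with respect to \emph{two} weight functions (boundary count and interior count) and prove existence by taking the union of two balanced separators --- size at most $2k+2$, which after redoing your boundary recurrence gives bags of size about $6k+6$ --- or to alternate boundary-halving levels with interior-halving levels, giving bags of size about $5k+5$. Either restores depth $O(\log n)$ and width $O(k)$, which is all this paper actually needs downstream, but neither recovers the stated constant $4k+1$; matching that constant requires the more careful bookkeeping of Elberfeld et al.\ themselves. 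A second, smaller inaccuracy: a piece is in general a union of \emph{several} components of $G-\partial_t$ (you bundle light components into two children), so one representative vertex does not pin it down; you need a canonical, deterministic bundling rule (say, greedy packing of components ordered by smallest vertex) so that membership can be re-derived during the replay by logspace connectivity queries.
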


Recall that each graph $G_i$ has diameter $d$ (Remark~\ref{rmrk:graph_strips}), so its treewidth is at most $3d + 1$ (Proposition~\ref{prop:local_tw}). Applying Proposition~\ref{prop:tree_dec} then directly yields the following lemma.

\begin{lemma}
  For each $G_i$ ($i \in [l + 2]$), one can compute a tree decomposition of width $(12d + 5)$ such that the decomposition tree is a rooted binary tree of depth $\Oh{\log{n}}$.The procedure runs in time $n^{\Oh{d}}$ and space $\Oh{d \log{n}}$.
\end{lemma}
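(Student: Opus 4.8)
The plan is to chain together the two preceding propositions and verify that the constants and resource bounds work out. First I would observe that each $G_i$ is an induced subgraph of $G$ (it is $G$ restricted to a union of consecutive BFS levels), and since induced subgraphs of planar graphs are planar, every $G_i$ is planar. By Remark~\ref{rmrk:graph_strips}, each $G_i$ has diameter at most $d$, so Proposition~\ref{prop:local_tw} (Robertson--Seymour) bounds the treewidth of $G_i$ by $3d + 1$.

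Next I would invoke Proposition~\ref{prop:tree_dec} (Elberfeld et al.) on $G_i$ with the parameter $k := 3d + 1$. This yields a tree decomposition whose underlying tree is a rooted binary tree of depth $\Oh{\log{n}}$ and whose width is $4k + 1 = 4(3d + 1) + 1 = 12d + 5$, exactly as claimed. For the resource bounds, I would note that each $G_i$ has at most $n$ vertices, so the running time is $n^{\Oh{k}} = n^{\Oh{3d + 1}} = n^{\Oh{d}}$ and the space usage is $\Oh{k \log{n}} = \Oh{(3d + 1) \log{n}} = \Oh{d \log{n}}$, matching the stated bounds.

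The one point that needs a moment's care is that Proposition~\ref{prop:tree_dec} is phrased for a graph of treewidth \emph{equal to} $k$, whereas Proposition~\ref{prop:local_tw} only furnishes the \emph{upper bound} $3d + 1$. I would confirm that the Elberfeld et al.\ procedure needs only an upper bound on the treewidth: supplying $k = 3d + 1$ as the treewidth parameter produces a decomposition of width at most $4k + 1$, which is all we require, and the time and space costs depend on $k$ in the same monotone fashion. With this observation in place the lemma follows immediately, so I expect no substantive obstacle here—the entire content is in the bookkeeping of constants and the confirmation that an upper bound on treewidth suffices.
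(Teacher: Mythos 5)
Your proof is correct and follows exactly the paper's own (implicit) argument: the paper derives this lemma by chaining Remark~\ref{rmrk:graph_strips} (diameter at most $d$), Proposition~\ref{prop:local_tw} (treewidth at most $3d+1$), and Proposition~\ref{prop:tree_dec} with $k = 3d+1$, giving width $4(3d+1)+1 = 12d+5$, time $n^{\Oh{d}}$, and space $\Oh{d \log{n}}$. Your additional observation that an upper bound on the treewidth suffices for the Elberfeld et al.\ procedure is a valid and careful refinement of the same approach, not a different one.
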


The next result provides a restricted-memory procedure that solves $\pDS$ on graphs of bounded treewidth. Let $H$ be a graph with $n$ vertices and treewidth $k$. Consider a tree decomposition $(T, \mathcal{B})$ for $H$ computed by the procedure of Proposition~\ref{prop:tree_dec}. Denote by $T$ the underlying tree (rooted at a vertex $r \in \V{T}$) and by $\mathcal{B} = \setb{B_v}{v \in \V{T}}$ the set of bags in the decomposition. By the proposition, the depth of $T$ is $\Oh{\log{n}}$ and its width is at most $4 \cdot (3k + 1) + 1 = \Oh{k}$, i.e.\ $\abs{B_v} = \Oh{k}$ for all $v \in \V{T}$.

For each $v \in \V{T}$, denote by $H_v$ the subgraph of $H$ induced by all bags in the subtree of $T$ rooted at $v$. Now consider the following procedure. 

\tempc{check}
\begin{procedure}[h]
\KwIn{$(H, T, \mathcal{B}, v, D, flag_{out})$; $H$ a graph, $(T, \mathcal{B})$ a tree decomposition for $H$, $v$ a vertex in $T$, $D$ a subset of $\V{H}$, $flag_{out}$ a boolean value}
\KwOut{$S_O$, a stream comprising an optimal dominating set for $H_v - \Nb{D}$ if $flag_{out}$ is \texttt{true}, and nothing otherwise}

	$min \gets \infty, D_{min} \gets \emptyset$\;
	\eIf{$v$ has no children in $T$}{
		let $\mathcal{D}_v$ be the set of dominating sets for $H[B_v] - \Nb{D}$\;
		\ForEach{$D' \in \mathcal{D}_v$}{
			\lIf{$\abs{D'} < min$}{$min \gets \abs{D'},\ D_{min} \gets D'$}
		}
    \lIf{$flag_{out}$ is \texttt{true}}{write $D_{min}$ to the output stream $S_O$}
		\Return{$min$}
	}{
		determine the left child $v_l$ and the right child $v_r$ of $v$ in $T$ if they exist\;
		let $\mathcal{D}_v$ be the set of dominating sets for $H[B_v] - \Nb{D}$\;
		\ForEach{$D' \in \mathcal{D}_v$}{
      $size_l, size_r \gets 0$\;
			\lIf{$v_l$ exists}{$size_l \gets \algoi{BdTWDomSet}{H, T, \mathcal{B}, v_l, D', \texttt{false}}$}
		  \lIf{$v_r$ exists}{$size_r \gets \algoi{BdTWDomSet}{H, T, \mathcal{B}, v_r, D', \texttt{false}}$}
			\lIf{$size_l + size_r + \abs{D'} < min$}{$min \gets \abs{D'},\ D_{min} \gets D'$}
		}
	  \If{$flag_{out}$ is \texttt{true}}{
      $\algoi{BdTWDomSet}{H, T, \mathcal{B}, v_l, D_{min}, \texttt{true}}$\;
      $\algoi{BdTWDomSet}{H, T, \mathcal{B}, v_r, D_{min}, \texttt{true}}$\;
      write $D_{min}$ to the output stream $S_O$
    }
		\Return{$min$}
	}

\caption{BdTWDomSet()}\label{proc:BdTWDomSet}
\end{procedure}

\begin{lemma}\label{lemm:plan_bd_width}
    The procedure \hyperref[proc:BdTWDomSet]{\algo{BdTWDomSet}}, called on $(H, T, r, \mathcal{B}, \emptyset, \texttt{true})$ computes an optimal dominating set for $H$ in time $n^{\Oh{k}}$ and space $\Oh{k \log^2{n}}$.
\end{lemma}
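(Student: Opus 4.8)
The plan is to prove correctness by structural induction on the decomposition tree $T$, showing that \algo{BdTWDomSet} implements a dynamic program over $(T, \mathcal{B})$, and then to bound the resources by analysing the shape of the recursion tree together with the size of a single stack frame. The governing invariant I would fix is the one the specification already suggests: for every $v \in \V{T}$ and every vertex set $D$ supplied by an ancestor call, the value returned by $\algoi{BdTWDomSet}{H, T, \mathcal{B}, v, D, \cdot}$ equals the minimum size of a set $X \subseteq \V{H_v}$ that dominates every vertex of $H_v$ not already dominated by $D$, i.e.\ that dominates $\V{H_v} \setminus \Nb{D}$; moreover, when $flag_{out}$ is \texttt{true}, the stream $S_O$ spells out such a minimum set. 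Instantiating this at the root $r$ with $D = \emptyset$ gives an optimal dominating set for $H_{r} = H$, which is the claim.

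For the inductive step the central structural tool is the separator property of tree decompositions: the bag $B_v$ separates $\V{H_v} \setminus B_v$ from the rest of $H$, and $\V{H_v}$ is partitioned as $B_v \cup (\V{H_{v_l}} \setminus B_v) \cup (\V{H_{v_r}} \setminus B_v)$. I would first treat the leaf case, where $H_v = H[B_v]$ and the procedure simply minimises over all dominating sets of $H[B_v] - \Nb{D}$, matching the invariant directly. For an internal node I would argue both inequalities of the recurrence $\mathrm{opt}(v, D) = \min_{D' \in \mathcal{D}_v}\brn{\abs{D'} + \mathrm{opt}(v_l, D') + \mathrm{opt}(v_r, D')}$: the ``$\leq$'' direction builds a feasible $X$ from a guessed bag solution $D'$ and recursively optimal child solutions, checking with the separator property that every vertex of $\V{H_v} \setminus \Nb{D}$ ends up dominated; the ``$\geq$'' direction takes a global optimum, restricts it to $B_v$ to recover a candidate $D'$, and shows the parts falling into the two subtrees are feasible for the corresponding subproblems. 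Correctness of the output pass then follows because the \texttt{true}-flag recursion re-descends with the fixed minimiser $D_{min}$ and emits exactly the sets realising the computed optimum.

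For the resources I would analyse the recursion tree directly. Since every bag has size $\Oh{k}$, the family $\mathcal{D}_v$ has at most $2^{\Oh{k}}$ members and can be enumerated and tested in time $2^{\Oh{k}}\,n^{\Oh{1}}$; a single call therefore spawns at most $2^{\Oh{k}}$ child calls. As the decomposition tree has depth $\Oh{\log n}$, the call tree has depth $\Oh{\log n}$ and branching $2^{\Oh{k}}$, so it contains $\brn{2^{\Oh{k}}}^{\Oh{\log n}} = 2^{\Oh{k \log n}} = n^{\Oh{k}}$ calls, each performing $2^{\Oh{k}}\,n^{\Oh{1}}$ local work, giving the time bound $n^{\Oh{k}}$. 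For space, the live recursion stack is always a single root-to-node path of $T$, hence has $\Oh{\log n}$ frames; each frame stores only a constant number of bag-sized objects ($D$, $D'$, $D_{min}$) together with $\Oh{1}$ counters, i.e.\ $\Oh{k \log n}$ bits. Multiplying gives $\Oh{k \log^2 n}$ bits, as claimed.

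The step I expect to be the crux is the ``$\geq$'' direction of the recurrence, i.e.\ showing that restricting a global optimum to the bags actually yields choices the procedure considers. The delicate point is a vertex $u \in B_v$ whose only dominator in the global solution lies strictly inside a child subtree: the procedure insists that $D'$ already dominate all of $B_v \setminus \Nb{D}$, so I must verify---using the connectivity of the bags containing $u$ and the way $\Nb{D}$ records previously selected vertices---that deferring such domination to a descendant never forces a charge beyond a bag-local choice, and that no selected vertex is counted twice across levels. This is precisely the bookkeeping that a three-state ``selected\,/\,dominated\,/\,not-yet-dominated'' tree-decomposition dynamic program handles explicitly, so I would take particular care to confirm that the $\Nb{D}$ mechanism faithfully reproduces that accounting before declaring the recurrence correct.
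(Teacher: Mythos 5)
Your strategy is the same as the paper's---structural induction over the decomposition tree to establish the recurrence, followed by a depth-times-frame-size analysis for space and a branching-times-depth analysis for time---and your resource paragraph matches the paper's almost line by line (the paper additionally charges the $n^{\Oh{k}}$ time and $\Oh{k \log{n}}$ space for producing $(T, \mathcal{B})$ via its tree-decomposition proposition, which you can absorb the same way). The gap is that you stop at what you yourself identify as the crux: you state the recurrence $\mathrm{opt}(v,D) = \min_{D' \in \mathcal{D}_v}\brn{\abs{D'} + \mathrm{opt}(v_l, D') + \mathrm{opt}(v_r, D')}$, observe that the ``$\geq$'' direction requires the $\Nb{D}$ bookkeeping to simulate the three-state dominating-set dynamic program, and then defer that verification. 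That confirmation \emph{is} the inductive step; without it the lemma is not proved.

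Moreover, the step you deferred cannot be confirmed as stated, so your caution is warranted. Two failure modes are built into the procedure: first, a vertex of $B_v$ whose only dominator in a global optimum lies strictly below $v$ is never accounted for, because every $D' \in \mathcal{D}_v$ must dominate all of $B_v \setminus \Nb{D}$ using bag vertices alone (indeed, vertices of $\Nb{D}$ are deleted from $H[B_v] - \Nb{D}$, so an already-dominated vertex cannot even be re-selected as a dominator); second, domination information propagates only one level, since a grandchild call receives only the child's bag choice $D'$ rather than the accumulated solution. A minimal counterexample: take the path $a$--$b$--$c$ with bags $\brc{a}$ (root), $\brc{a,b}$, $\brc{b,c}$, a valid width-$1$ decomposition. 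The optimum is $\brc{b}$, of size $1$, but the root call is forced to choose $D' = \brc{a}$ (the only dominating set of $H[\brc{a}]$), the middle call then can only pass $\emptyset$ downward, and the leaf call pays again to dominate the edge $bc$, so the procedure returns $2$. Hence the ``$\geq$'' inequality is false in general; repairing it requires strengthening the procedure and the invariant (three states per bag vertex, or passing down the set of already-dominated bag vertices instead of only the parent's bag choice). You should also know that the paper's own proof does not close this gap either---it asserts that the minimizing combination ``corresponds to a minimum dominating set for $H_v$'' without addressing bag vertices dominated from below---so you have located a genuine weakness in the argument; but as a proof of the lemma, your proposal is incomplete at exactly that point.
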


\begin{proof}
  For each $v \in \V{T}$, let $V_v$ be the set of vertices in $B_v$ and those in the bags of $v$'s children (if they exist) that are adjacent to vertices in $B_v$.

	Assume for induction that for any $v \in \V{T}$, any subset $D \subseteq B_v$ and any child $v_c$ of $v$, $\algoi{BdTWMaxSAT}{H, T, \mathcal{B}, v_c, D, \texttt{true}}$ outputs a dominating set of $size_c$ for $H_{v_c} - \Nb{D}$ such that $\abs{D} + size_l + size_r$ ($size_l$ and $size_r$ correspond to the sizes of dominating sets output by calls to $\algo{BdTWDomSet}$ on the left and right children of $v$ in $T$) is minimum.

  \tempc{say what the call stack stores}
  
  Now consider a procedure call $\algoi{BdTWDomSet}{H, T, \mathcal{B}, v, D, \texttt{true}}$. The procedure first determines if $v$ has any children. If it does not, then the procedure iterates over all dominating sets for $H[B_v] - \Nb{D}$, finds one of minimum size and writes it to $S_O$. Thus, the procedure is correct in the base case. 

	In the other case, i.e.\ $v$ has children, the procedure determines the left and right children of $v$ by scanning $(T, \mathcal{B})$. The loop iterates over all dominating sets $D'$ for $H[B_v] - D$ and calls itself recursively on the children of $v$ with $D'$. Because of the induction hypothesis, the recursive calls output dominating sets $D_l$ of size $size_l$ and $D_r$ of size $size_r$ such that $size_l + size_r + \abs{D'}$ is minimum.

	Observe that because $(T, \mathcal{B})$ is a tree decomposition, the dominating set $D_l$ for the left child is not incident with vertices in $H_{v_r}$ and vice versa. Thus for each $D' \in \mathcal{D}_v$, dominating sets determined in one branch have no influence on dominating sets determined in the other branch. Overall, the dominating sets $D_l$ and $D_r$ from the child branches together with $D'$ dominate all of $H_v$, and the combination that minimizes $size_l + size_r + \abs{D'}$ corresponds to a minimum dominating set for $H_v$. This proves the inductive claim, and thus the procedure is correct.

	We now prove the resource bounds for the procedure, momentarily assuming constant-time access to $G$ and $(T, \mathcal{B})$.
  
  Observe that in each recursive call, the loops iterate over sets of size $\Oh{k}$, which can be stored in space $\Oh{k \log{n}}$. It is easy to see that the individual operations also use $\Oh{k \log{n}}$ bits of space. Since $T$ has depth $\Oh{\log{n}}$, the depth of the recursion tree is also $\Oh{\log{n}}$, and therefore the call $\algoi{BdTWDomSet}{H, T, \mathcal{B}, r, \emptyset, \texttt{true}}$ uses a total of $\Oh{k \log^2{n}}$ bits of space.

	Now consider the time costs. Observe that the non-iterative operations as well as the non-recursive operations are polynomial-time. Additionally, the loops have $2^{\Oh{k}}$ iterations. Thus, if the recursive calls take time $T$, the overall running time of the procedure is $2^{\Oh{k}}\,(2T + n^{\Oh{1}})$. Since the depth of the recursion tree is $\Oh{\log{n}}$, this expression resolves to $n^{\Oh{k}}$.

	Now observe that by Proposition~\ref{prop:tree_dec}, $(T, \mathcal{B})$ is computable in time $n^{\Oh{k}}$ and space $\Oh{k \log{n}}$. The real resource costs of $\algoi{BdTWDomSet}{H, T, \mathcal{B}, r, \emptyset, \texttt{true}}$ are therefore $n^{\Oh{k}} \cdot n^{\Oh{k}} = n^{\Oh{k}}$ time and $\Oh{k \log^2{n}} + \Oh{k \log{n}} = \Oh{k \log^2{n}}$ space.
\end{proof}

We now have all the ingredients necessary to prove Theorem~\ref{thrm:apx_plan_dsvc}.

\begin{proof}[Proof of Theorem~\ref{thrm:apx_plan_dsvc}]
  We describe an algorithm that computes the desired solutions. Let $G$ be the input planar graph and consider the BFS traversal of $G$ as in the setting of Lemma~\ref{lemm:plan_partition}. The graph is organized into levels $L_1, \dotsc, L_h$ in the BFS traversal.

  \textbf{Choosing a good split.} Set $d = \ceil{1 / \epsilon}$ and perform the following steps for all values $1 \leq j \leq d$, each time only counting the number of vertices in $S_O$ (final stream, described below) and suppressing $S_O$ itself. Then perform the following steps a final time for the value of $j$ that produces the smallest solution, and produce $S_O$ as the output. The tasks in this step can be performed using a counter for vertices seen in $S_O$. This has polynomial time and logarithmic space overheads.\\

  \noindent{\sffamily\bfseries For each value of $j$:}\\[0.5ex]

  \textbf{Splitting the graph.} Using the procedure of Lemma~\ref{lemm:plan_partition} with $G$, $d$ and $j$ as input, decompose the graph into $G_1, \dotsc, G_{l + 2}$. Denote this stream of graphs by $S_D$. This directly uses the procedure, and thus adds a polynomial time overhead and an $\Oh{\sqrt{n} \log{n}}$ space overhead.

  \textbf{Exact solutions.} Observe that the graphs $G_1, \dotsc, G_{l + 2}$ have diameter no more than $k = 3d + 1$. For each graph in $S_D$, compute a minimum dominating set using the procedure in Lemma~\ref{lemm:plan_bd_width} and write it to the stream $S_I$.  

  With access to $S_D$, the overhead for carrying out the procedure in Lemma~\ref{lemm:plan_bd_width} is $n^{\Oh{k}} = n^{\Oh{(1 / \epsilon)}}$ time and $\Oh{k \log^2{n}} = \Oh{(1 / \epsilon) \log^2{n}}$ space.

  \textbf{Ensuring uniqueness.} By making multiple passes over $S_I$, output unique vertices in $S_I$ to the stream $S_O$. This can be done using a loop over all vertices $v \in \V{G}$ that outputs $v$ to $S_O$ if $v$ is found in $S_I$. This is again a polynomial-time and logarithmic-space overhead.

  Combining overheads of the individual steps yields the overall resource costs: $n^{\Oh{(1 / \epsilon)}}$ time and $\Oh{\sqrt{n} \log{n}} + \Oh{(1 / \epsilon) \log^2{n}} = \Oh{((1 / \epsilon) \log{n} + \sqrt{n}) \log{n}}$ space.

  To see that the algorithm correctly computes an $(1 + \epsilon)$-approximate minimum dominating set for $G$, observe that the procedure of Lemma~\ref{lemm:plan_bd_width} is used to compute optimal dominating sets $S_i$ for the graphs $G_i$ ($i \in [l + 2]$). Their union $S = \sum_{i \in [l + 2]} S_i$ clearly dominates all of $G$.
  
  In the initial step (a loop), all values $1 \leq j \leq d$ are probed. Recall that $G_i = G[L_{j + (i - 2)d} \cup \dotsb \cup L_{j + (i - 2)d}]$ for $1 < i < l + 2$. Now consider an optimal dominating set $S^*$ for $G$ and its restrictions $S^*_i = S^* \cap \V{G_i}$, which dominate the graphs $G_i$. There is a $1 \leq j \leq d$ such that the levels $L_r$ with $r \equiv j \pmod{d}$ contain at most $\abs{S^*} / d$ vertices of $S^*$. In the sum $\sum_{i \in [l + 2]} \abs{S^*_i}$, only vertices in levels $r \equiv j \pmod{d}$ are counted twice, and we have $\sum_{i \in [l + 2]} \abs{S^*_i} \leq \abs{S^*} + \abs{S^*} / d \leq (1 + \epsilon) \abs{S^*}$. Since the dominating sets $S_i$ are optimal for $G_i$, we have $\sum_{i \in [l + 2]} \abs{S_i} \leq \sum_{i \in [l + 2]} \abs{S^*_i} \leq (1 + \epsilon) \abs{S^*}$. Thus, $S$ is a $(1 + \epsilon)$-approximate minimum dominating set.
\end{proof}

Although Theorem~\ref{thrm:apx_plan_dsvc} provides a scheme for $(1 + \epsilon)$-approximating $\pDS$ and $\pVC$ on planar graphs, we only use it to compute constant-factor approximations later on. The scheme itself (and variations of it) can likely be used to devise restricted-memory algorithms for other problems as well.

\section{Computing region decompositions}\label{sect:comp_reg_decomp}

As outlined in Section~\ref{ssct:regions} regions are closed subsets of the
plane with specific properties. Our goal in this section is to compute and store
all regions of a given planar graph $G=(V, E)$ that is embedded in the plane,
while using $O(\log n)$ space. We clearly can not achieve such a space
bound when using a standard representation of regions (i.e., storing all
vertices inside each region). For this we consider \emph{compressed regions},
defined as follows. Let $R(u, v)$ be an arbitrary region in $G$ with $u, v \in
V$. Instead of storing the entire region, we only store the boundary of $R(u,
v)$. For the following intuition assume we are working with a planar graph given
with an arbitrary embedding. To store the boundary it suffices to store the
paths that form the boundary of $R(u, v)$, which are stored as entries in the
clockwise order of the adjacencies of each vertex. As each boundary consists of
at most $2 (c_V + c_E + 1)$ this requires the specified number of bits to store,
with $c_V$ and $c_E$ problem-specific constants introduced in Section~\ref{ssct:regions}.
(we also store $u$ and $v$).

Using a depth-first search (DFS) we can reconstruct a region $R(u, v)$ on-the-fly.
As each vertex inside the region is distance at most $c_V$ to either $u$ or $v$
we can visit all vertices limiting the maximal length of explored paths by $c_V$.
For each vertex on the stack we must store only the index of the next unexplored
outgoing edge. This requires $O(c_V \log n)$ bits and allows us to output
all vertices in a given compressed region in constant time per element.
We call a region decomposition where each region is stored as a compressed region
a \textit{compressed region decomposition}.

We can now show the following Lemma, which allows us to construct and store a
maximal region decomposition space-efficiently.

\begin{lemma}\label{lemm:reg_decomp}
  Let $G$ be a planar graph and $S \subset V(G)$ an arbitrary set of vertices.
  We can construct a compressed maximal $S$-region decomposition
  together with an embedding of $G$ using
  $O((c_V+c_E)|S|\log n)$ space in polynomial time.
\end{lemma}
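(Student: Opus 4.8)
The plan is to compute a maximal $S$-region decomposition incrementally, storing each region only in its compressed form (its two boundary paths plus anchors $u, v$), and to argue that both the construction and the storage fit within $\Oh{(c_V + c_E)\abs{S}\log n}$ bits. First I would compute an embedding of $G$ space-efficiently; since the lemma promises the decomposition \emph{together with} an embedding, I would invoke a logspace (or $\Oh{\sqrt{n}\log n}$-space) planar embedding subroutine and store it in the compressed rotation-system format already described before the lemma, where each vertex records its neighbors in clockwise order. Given the embedding, a region $R(u,v)$ is fully determined by its two boundary paths, each of length at most $c_V + c_E + 1$, so a single compressed region occupies $\Oh{(c_V + c_E)\log n}$ bits. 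By the proposition of Guo and Niedermeier a maximal decomposition has at most $c_V(3\abs{S} - 6)$ regions, so storing all of them simultaneously costs $\Oh{(c_V + c_E)\abs{S}\log n}$ bits, matching the target.

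The construction itself I would drive by the maximality condition: repeatedly search for a region $R(u,v)$ between two anchors $u, v \in S$ that can be added to the current decomposition $\mathcal{R}$ without any vertex lying strictly inside two regions, i.e.\ such that $\V{\mathcal{R}} \subsetneq \V{\mathcal{R} \cup \brc{R}}$. To find a candidate region, I would enumerate pairs $u, v \in S$ and, for each, enumerate the at-most-$\Oh{(c_V+c_E)\log n}$-bit pairs of boundary paths of length at most $c_V + c_E + 1$ emanating from $u$ and $v$ in the embedding; each such pair delimits a closed region of the plane whose interior/boundary vertices can be enumerated on-the-fly by the bounded-depth DFS described above (visiting only vertices within distance $c_V$ of $u$ or $v$). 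For each candidate I would verify the three defining properties of \autoref{defn:region} and check disjointness of strict interiors against the already-stored regions, using $\Oh{\log n}$ space per streamed vertex by re-decompressing each stored region as needed. Whenever a valid augmenting region is found, I add its compressed form to $\mathcal{R}$ and restart the search; when no pair admits a new region, $\mathcal{R}$ is maximal.

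The correctness argument is then immediate: the loop terminates exactly when no region can be added while strictly increasing $\V{\mathcal{R}}$, which is precisely the definition of maximality, and the running time is polynomial because the number of anchor pairs is $\Oh{\abs{S}^2}$, the number of candidate boundary-path pairs per anchor pair is polynomial (bounded paths of constant length in a graph of bounded degree along the embedding give $n^{\Oh{c_V + c_E}}$ candidates), each verification streams $\Oh{n}$ vertices against $\Oh{\abs{S}}$ stored regions, and at most $\Oh{\abs{S})}$ successful augmentations occur. The space bound holds because at any moment we store only $\mathcal{R}$ (costing $\Oh{(c_V+c_E)\abs{S}\log n}$ bits), the embedding, a constant number of $\Oh{(c_V+c_E)\log n}$-bit candidate descriptions, and $\Oh{\log n}$-bit loop counters and DFS stacks.

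The main obstacle I anticipate is the geometric bookkeeping of \emph{which vertices lie strictly inside} a candidate region $R(u,v)$ when regions are never materialized but only reconstructed from their boundaries, and ensuring this inside-test is both correct (respecting the embedding's rotation system so that ``inside'' is well-defined) and executable with $\Oh{\log n}$ working space per vertex. Closely tied to this is verifying the non-overlap condition across all stored compressed regions without ever holding more than one decompressed region at a time; the care here is that two regions may legitimately share boundary vertices but no strict-interior vertex, so the test must distinguish boundary membership from strict-interior membership, which the bounded-depth DFS traversal of \autoref{defn:region} can be made to report. Everything else reduces to the already-established bounded-depth DFS reconstruction and the Guo--Niedermeier size bound, so the proof effort concentrates on making the planar inside/overlap tests rigorous under the compressed representation.
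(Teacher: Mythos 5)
Your overall architecture (compressed regions, DFS-based reconstruction, a space-efficient embedding, validity checks by re-decompressing stored regions) matches the paper's, but your construction loop does not, and that difference is exactly where your proof breaks. You insert an \emph{arbitrary} augmenting region whenever one exists, stop at maximality, and then bound the number of stored regions by applying the $c_V(3\abs{S}-6)$ bound to the final, maximal decomposition. Under the coverage-based definition of maximality used here, that inference is unsound: maximal decompositions can have $\Om{n}$ regions, and your procedure can actually build one. Take $G = K_{2,m}$ with $S=\brc{u,v}$ the two degree-$m$ vertices and common neighbours $w_1,\dotsc,w_m$ embedded in order; the degenerate regions $R_i$ bounded by the paths $u w_i v$ and $u w_{i+1} v$ have empty strict interiors, so $\brc{R_1,\dotsc,R_{m-1}}$ is a valid $S$-region decomposition, each $R_i$ covered one new (boundary) vertex at the moment it was inserted, and at the end every vertex of $G$ is covered, so no further region can increase coverage and the decomposition is maximal --- with $m-1 = \Om{n}$ regions. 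Your algorithm is free to perform exactly this sequence of augmentations, so it can be forced to store $\Om{n \log n}$ bits, destroying the claimed $\Oh{(c_V+c_E)\abs{S}\log n}$ space bound (polynomial time survives, since each augmentation is polynomial and there are at most $n$ of them). Note that this also means you cannot hide behind the proposition as restated in the paper (``any maximal $S$-region decomposition''): read literally it fails on the same example; the bound is genuinely established only for decompositions built by the specific greedy of Alber et al.

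The paper avoids this by implementing that greedy: iterate over the vertices of $G$, and for each vertex $u$ not yet covered by $\mathcal{R}$, enumerate \emph{all} candidate regions containing $u$ that do not overlap stored regions and contain no $S$-vertices besides their anchors, and insert one of \emph{maximum} coverage. It is this ``largest region per uncovered vertex'' rule --- not maximality of the end result alone --- that yields the $\Oh{c_V\abs{S}}$ bound on the number of regions and hence the space bound; in the $K_{2,m}$ example it inserts the single large region covering all the $w_i$ instead of $m-1$ nested slivers. To repair your proof, replace the ``any augmenting region'' rule by this vertex-driven largest-region rule; the rest of your argument (compressed storage, reconstruction-based inside/overlap tests, polynomial enumeration of length-at-most-$(c_V+c_E+1)$ boundary paths) then goes through essentially as in the paper.
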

\begin{proof}
  We can compute an embedding such that access to the embedding operation is
  available in polynomial time using $O(\log n)$ space total due to~\cite{DP2011TAMC}.
  Thus, from now on assume that we have access to some arbitrary embedding
  of $G$.

  We give a quick overview of the algorithm of Alber et al.~\cite{AFN2004JACM},
  and in particular what
  data-structures are required. The algorithm maintains a set
  $V_{\texttt{used}}$ of already processed vertices and a list $\mathcal{R}$ of
  regions, which once the algorithm finishes, contains the maximal region
  decomposition. Now, the algorithm iterates over all vertices $u \in V$ and
  does the following: if $u \notin V_{used}$ find a region $R \notin
  \mathcal{R}$ that (I) contains $u$, (II) does not overlap with any region
  already contained in $\mathcal{R}$ and (III) does not contain any vertices of
  $S$ except its two anchors. If no such region exist, do nothing. If multiple
  such regions exist, choose an arbitrary maximal region for which the
  conditions hold and add this region to $\mathcal{R}$ and add all vertices
  inside the newly found region to $V_{used}$.

  To reduce the required bits of the algorithm we do not store
  $V_{\texttt{used}}$ explicitly, but instead check in polynomial time if a
  vertex is contained inside a region of $\mathcal{R}$. Regions are stored as
  compressed regions in $\mathcal{R}$. All that is left is to show how to check
  conditions I-III space-efficiently.

  First, we can construct any region $R(u, v)$ between two vertices $u, v$ of
  $S$ in polynomial time by checking if there are two paths of length $c_E$
  between them (using a standard DFS, the same we use for reconstructing
  regions). Condition I can be simply checked while constructing all regions.
  Condition II can be checked by iterating over all previously computed
  compressed regions stored in $\mathcal{R}$ and reconstructing each region $R$.
  While reconstructing $R$ anytime a vertex is output as part of the region,
  check if this vertex is contained in $R(u, v)$ by fully reconstructing $R(u,
  v)$. An analogous strategy can be used for condition III. Finally, we are only
  interested in a maximal region that adheres to these conditions: simply repeat
  the process for all possible regions $R(u, v)$ and count the number of
  contained vertices, storing the current largest region constructed so far.
  Afterwards, this largest region is maximal by definition.
\end{proof}

\section{A simple kernel for $\pDS$}\label{sect:simple_ds}

For the rest of this section let $G=(V, E)$ be a planar graph
and denote with $k$ the size of an optimal $\pDS$ of $G$. In this section we
show how the polynomial time kernelization algorithm due to Alber et
al.~\cite{AFN2004JACM} can be implemented using $O(k \log n)$ bits and
polynomial time. The kernel of Alber et al. is based on two reduction
rules (defined later), which are applied exhaustively.

They key to implementing the algorithm space-efficiently is that anytime a
reduction rule is applied successfully, a vertex is forced to be part of an
optimal dominating set. We show how each application can be expressed with the
insertion of a gadget into the graph consisting of a constant number of
vertices, instead of explicitly deleting vertices, which we are not able to do
within our space requirement of $O(k \log n)$ bits. We then require
$O(k)$ such gadgets, which allow us to implement a graph interface for
the reduced graph, i.e., the kernel. Furthermore, we start by describing the
first reduction rule of Alber et al. and how to implement it space-efficiently.
Following, we proceed analogously with Alber et al.'s second reduction rule. For
this we first introduce some definitions. 

\textbf{Rule I.} For the first reduction rule, we first introduce some
definitions. Let $u \in V$. We define $N_1(u) := \{v \in N(u) : N(v) \setminus
N[u] \neq \emptyset\}$, $N_2(u) := \{v \in N(u)\setminus N_1{u} : N(v) \cap
N_1(u) \neq \emptyset\}$ and $N_3(u) := N(u) \setminus (N_1(u) \cup N_2(u))$.
The reduction rule works as follows: for any vertex $u \in V$ with $N_3(u) \neq
\emptyset$ remove $N_2(u) \cup N_3(u)$ from $G$ and add a new vertex $u'$ with
edges $\{u, u'\}$ to $G$. The insertion of the new vertex ensures that $u$ (or
$u'$) is always taken into an optimal dominating set. In the following we denote
with $G_{I}$ the graph obtained from applying rule I exhaustively to $G$.

\begin{lemma}\label{lemm:rule_i}
    Let $G=(V, E)$ be a planar graph. We can construct the graph $G_I$ using
    $O(r_1 \log n)$ bits such that all access operations to $G_I$ run in
    polynomial time, with $r_1$ the number of successful applications of
    reduction rule I. The construction takes polynomial time.
\end{lemma}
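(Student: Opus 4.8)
The plan is to avoid ever materializing $G_I$ explicitly, and instead to store only a witness of the reduction: a list of the $r_1$ successful applications, represented as constant-size gadgets, on top of which we expose a read-only interface to $G_I$. Concretely, for the $j$-th successful application, at an anchor $u_j$, we record the pair $(u_j, u_j')$, where $u_j'$ is the fresh pendant forced into the dominating set; encoding $u_j'$ by the index $n + j$ makes each gadget $\Oh{\log n}$ bits, so the whole list occupies $\Oh{r_1 \log n}$ bits. Writing $G_j$ for the graph obtained after the first $j$ applications (so $G_0 = G$ and $G_{r_1} = G_I$), the interface answers vertex and adjacency queries for $G_I$ as follows: every pendant $u_j'$ is present; an original vertex $w$ is present iff it is removed by no gadget, where \emph{removed by gadget $j$} means $w \in N_2(u_j) \cup N_3(u_j)$ as computed in $G_{j-1}$; and two present vertices are adjacent iff the corresponding edge is present in $G$ (or is a pendant edge $u_j u_j'$). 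Since $N_1, N_2, N_3$ of an anchor are themselves defined through $N(\cdot)$ and $N[\cdot]$ in the intermediate graph $G_{j-1}$, the interface is defined by recursion on the gadget index, bottoming out at direct queries to the read-only $G$.

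First I would implement the construction as an exhaustive fixpoint loop. Repeatedly scan all candidate anchors $u \in \V{G}$; for each, use the current interface to compute $N_1(u), N_2(u), N_3(u)$ in the current reduced graph, and whenever $N_3(u) \neq \emptyset$ and $u$ is not already an anchor, append a new gadget $(u, u')$. The loop terminates once a full scan appends nothing, at which point the recorded list induces $G_I$. Each anchor is distinct, so $r_1 \leq n$ and the list never exceeds $\Oh{r_1 \log n}$ bits; apart from this list, every individual neighborhood computation needs only $\Oh{\log n}$ bits of working memory (a few vertex indices and loop counters), so the whole construction stays within the stated space bound. Correctness of the output is immediate: the loop applies reduction rule I precisely at those vertices where it is applicable, until none remains, which is exactly the definition of exhaustive application, so the induced graph equals $G_I$.

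The delicate point --- and the step I expect to be the main obstacle --- is certifying that each access operation, and hence the construction, runs in polynomial time. A single presence query at level $j$ triggers removal checks against all earlier gadgets, and each such check recomputes $N_1, N_2, N_3$ of an anchor in $G_{j-1}$, which in turn issues adjacency (hence presence) queries one level lower; a naive unfolding of this recursion branches by a polynomial factor at each of up to $r_1$ levels and is therefore not obviously polynomial. To control it, I would exploit two structural facts about rule I. First, removal is monotone: once a vertex is dropped it never reappears, so the intermediate graphs form a descending chain $G \supseteq G_1 \supseteq \dotsb \supseteq G_{r_1}$ of induced subgraphs (modulo pendants). Second, the vertices removed by gadget $j$, namely $N_2(u_j) \cup N_3(u_j)$, have all of their $G_{j-1}$-neighbours inside $N[u_j]$; deletions are thus local to a closed neighbourhood and cannot affect the adjacencies of vertices outside $N[u_j]$. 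The plan is to use this locality to argue that determining the presence of a single vertex only ever consults neighbourhood information around the anchors, so that the entire removal status can be recovered by one forward sweep over the $r_1$ gadgets in $\Oh{r_1 \log n}$ space and polynomial time, rather than by the branching recursion. Establishing this poly-time bound rigorously is the crux; once it is in hand, every primitive access operation on $G_I$ (membership, neighbour enumeration, adjacency test) is a constant number of presence/adjacency queries and is therefore polynomial-time, which is exactly the claim of the lemma.
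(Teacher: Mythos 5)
Your setup (store only the list of gadgets $(u_j,u_j')$, $O(r_1\log n)$ bits, and answer all queries to $G_I$ through an interface that filters the read-only $G$) is the same as the paper's. The difference is in how the interface is defined, and this is exactly where your proof has a genuine gap that you yourself flag: you define ``removed by gadget $j$'' via $N_2(u_j)\cup N_3(u_j)$ \emph{computed in the intermediate graph} $G_{j-1}$, which makes every presence query recurse through all earlier gadgets, and you then only state a \emph{plan} to tame this recursion (``establishing this poly-time bound rigorously is the crux''). A proof attempt that leaves its acknowledged crux unestablished is incomplete: nothing in your write-up actually shows that the branching recursion collapses to a single forward sweep, so the polynomial-time claim of the lemma --- which is the entire content of the lemma beyond the trivial space bound --- is not proved.

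The paper closes precisely this hole by flattening the recursion to depth one: all checks are made against the \emph{original} graph $G$, never against intermediate graphs. Concretely, a vertex $w$ is declared deleted iff $w\in N_2(v)\cup N_3(v)$ \emph{in $G$} for some anchor $v$ in the gadget list $L$, and the neighbourhood of an anchor $u$ in the reduced graph is simply $N_1(u)$ \emph{in $G$} (plus the pendant $u'$); each of these is a constant-depth nested loop over $G$ using $O(\log n)$ working bits, so every access operation is polynomial outright. The justification for this flattening is the structural fact (from Alber et al.) that after rule I is applied at $u$, no vertex of $N[u]$ ever lies in $N_2(v)\cup N_3(v)$ for any other vertex $v$ --- distinct applications of rule I do not interfere, so the sets $N_i$ never need to be recomputed in partially reduced graphs. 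You actually state the two ingredients of this fact (monotonicity of deletion, and that every $G_{j-1}$-neighbour of a vertex in $N_2(u_j)\cup N_3(u_j)$ lies in $N[u_j]$), so you are one step away: instead of using locality to ``control'' the recursion, use it to prove that the recursion is unnecessary, i.e.\ that the deletion predicate and anchor neighbourhoods relative to $G$ coincide with those relative to every intermediate graph. Without that step (or some substitute), your proof does not establish the lemma.
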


\begin{proof}
    For the reduction rules to be applicable we first need a way to iterate over
    $N_{i}(u)$ for $u \in V$ and $i \ in [3]$ using $O(\log n)$ bits. To iterate
    over $N_1(u)$ simply iterate over all $v \in N(u)$, and for each $v$ iterate
    over $w \in N(v)$ and check if at least one such $w \notin N[u]$, in which
    case output $v$. Clearly this can be done in polynomial time. Iteration over
    $N_2(u)$ and $N_3(u)$ works analogously, with $N_2(u)$ using the previously
    described iteration over $N_1(u)$ and $N_3(u)$ then using the iteration over
    $N_2(u)$ as a subroutine.

    In the following we maintain a list of gadgets we insert to $G$ after
    successful applications of rule I. We refer to this initially empty list as
    $L$. For any given vertex $u \in V$ we can now check if rule I can be
    applied to $u$ in polynomial time. If this is the case, insert a gadget
    $\{u, u'\}$ to $L$. We show how this virtually applies rule I to $u$. First
    note that after deleting $N_{2}(u) \cup N_{3}(u)$ from $G$ no vertices of
    $N[u]$ are contained in any $N_2(v)$ or $N_3(v)$ for any $v \in V \setminus
    \{u\}$. Thus, subsequent applications of rule I will never delete a vertex
    of $N[u]$. Now, for each vertex to which rule I can be applied, we add the
    aforementioned gadget to $L$. We now show how to implement a graph interface
    to the graph that does not output vertices deleted due to previous
    applications of rule I and correctly outputs inserted gadget edges.

    To output the neighbourhood of a vertex $u$ in the (partially) reduced graph,
    first check if a gadget $\{u, u'\}$ is contained in $L$. In this case, the
    neighbourhood of $u$ is equal to $N_1{u}$ in $G$, which we have previously
    shown can be output in polynomial time. If no such gadget is contained in
    $L$, it might be the case that $u$ was deleted due to rule I. For
    this, iterate over all gadgets $\{v, v'\}$ in $L$ and check if $u \in N_2(v)
    \cup N_3(v)$ in $G$, in which case $u$ was previously deleted (and therefore
    no neighbors should be output). If $u$ was not deleted, we can start
    outputting the neighbors of $u$. For this, iterate over all $w \in N(u)$
    (in $G$) and check if $w$ still exists in the reduced graph, analogously how
    we checked the existence of $u$ previously. If it exists, output $w$,
    otherwise continue with the iteration over $N(u)$. 

    This scheme allows the implementation of an iterative application of rule I,
    and once rule I can no longer be applied, is exactly the interface of graph
    $G_I$. Note that since $G$ is a planar graph, we assume iteration over the
    neighbourhood of a vertex in clockwise or counter-clockwise order. The
    relative order of neighbors is maintained in $G_I$. The space requirement is
    the bits required to store $L$, which is $O(r_1 \log n)$ bits.
\end{proof}

\textbf{Rule II.} The second reduction rule effectively searches for regions
$R(u, v)$ (Definition~\ref{defn:region}) such that all interior vertices of
$R(u, v)$ must be dominated by $u$ and/or $v$ in an optimal dominating set of
$G$, with $u, v \in V$. Analogous to Rule I., we introduce some additional
notation. Let $u, v$ be two vertices of $V$. We first define $N(u, v):=N(u) \cup
N(v)$, and analogously $N[u, v]:=N[u] \cup N[v]$. Denote with $N_1(u, v) := \{w
\in N(u, v) : N(w) \setminus N[u, v] \neq \emptyset\}$, $N_2(u, v) := \{w \in
N(u, v) \setminus N_1(u, v) : N(w) \cap N_1(u, v) \neq \emptyset\}$ and $N_3(u,
v):=N(u, v) \setminus (N_1(u, v) \cap N_2(u, v))$. The details of the reduction
rules of Alber et al. are described in the context of the proof of the following
lemma, we now only give a shortened version. Let $u, v$ be two vertices of $V$
such that $|N_3(u, v)| > 1$ and $N_3(u, v)$ cannot be dominated by a single
vertex from $N_2(u, v) \cup N_3(u, v)$. We call such a pair of vertices $(u, v)$
for which the previous conditions hold a \textit{reduction candidate}. For each
such reduction candidate, Rule II. removes $N_3(u, v)$ and $N_2(u, v)$ (and
possible additional vertices from the neighbourhood of $u$ or $v$) from the graph
and inserts a gadget, depending on if both of $u, v$ must be contained in an
optimal dominating set, or if only one of them must be contained. The following
lemma shows that given a planar graph $G$ (or an interface to a planar graph), we
can reduce $G$ to a graph $G_{II}$ which is the graph $G$ reduced exhaustively
by applications of Rule II.

\begin{lemma}\label{lemm:rule_ii}
    Let $G=(V, E)$ be a planar graph. We can construct the graph $G_{II}$ using
    $O(r_2 \log n)$ bits such that all access operations to $G_{II}$ run in
    polynomial time, with $r_2$ the number of successful applications of
    reduction Rule II. The construction takes polynomial time.
\end{lemma}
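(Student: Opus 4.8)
The plan is to mirror the strategy used for Rule I in Lemma~\ref{lemm:rule_i}: instead of physically deleting the vertices that Rule II removes, we record a constant-size gadget per successful application in a list $L_{II}$ and implement a graph interface for $G_{II}$ that consults $L_{II}$ on-the-fly. Since each gadget stores only $\Oh{1}$ vertex identifiers, the list occupies $\Oh{r_2 \log n}$ bits, matching the claimed bound. The correctness and efficiency of the interface then reduce to two subtasks: (i) deciding in polynomial time and $\Oh{\log n}$ space whether a given pair $(u, v)$ is a reduction candidate, and (ii) given $L_{II}$, answering neighborhood queries for $G_{II}$ in polynomial time without ever materializing the deleted vertex sets.

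For subtask~(i), I would first show how to iterate over $N_1(u,v)$, $N_2(u,v)$ and $N_3(u,v)$ using only $\Oh{\log n}$ bits, exactly as in the proof of Lemma~\ref{lemm:rule_i} but with the two-vertex closed neighborhood $N[u,v]$ in place of $N[u]$. Iterating over $N_1(u,v)$ means scanning each $w \in N(u,v)$ and testing whether some neighbor of $w$ lies outside $N[u,v]$; the sets $N_2$ and $N_3$ are then defined in terms of $N_1$ and computed by nested iteration, each pass using only constantly many vertex registers. With these iterators in hand, testing $\abs{N_3(u,v)} > 1$ is a simple counting loop, and testing whether $N_3(u,v)$ can be dominated by a single vertex from $N_2(u,v) \cup N_3(u,v)$ is a doubly-nested loop: for each candidate $x \in N_2(u,v) \cup N_3(u,v)$, check whether every $w \in N_3(u,v)$ lies in $N[x]$. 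All of this is polynomial time and logarithmic space.

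For subtask~(ii), I would describe the gadget precisely following Alber et al.: when $(u,v)$ is a reduction candidate, Rule II removes $N_3(u,v)$ and $N_2(u,v)$ (and possibly some further neighbors of $u$ or $v$) and inserts either a gadget forcing both $u$ and $v$ into the solution or one forcing exactly one of them, according to whether $N_3(u,v)$ can be dominated by $u$ alone, by $v$ alone, or by neither. Each gadget is recorded in $L_{II}$ together with $u$, $v$, and a flag indicating its type. The neighborhood interface for a queried vertex $u$ then works as in Lemma~\ref{lemm:rule_i}: first check whether $u$ is the anchor of a gadget in $L_{II}$ (in which case its neighborhood is dictated by the gadget), otherwise determine whether $u$ was deleted by scanning $L_{II}$ and testing, for each recorded candidate $(v,v')$, whether $u$ lies in the corresponding removed set in $G$; if $u$ survives, output those of its original neighbors $w \in N(u)$ that likewise survive, preserving the clockwise adjacency order as in $G_I$. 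Because the interfaces compose, applying Rule II exhaustively amounts to repeatedly scanning pairs until no new candidate is found, and the resulting interface is exactly that of $G_{II}$.

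The main obstacle I anticipate is the bookkeeping around the \emph{additional} vertices that Rule II may delete beyond $N_2(u,v) \cup N_3(u,v)$, and the case analysis determining which gadget type to insert: unlike Rule I, whose gadget is a single pendant vertex, Rule II's behavior depends on a finer classification of the region, so the deletion-membership test in the interface must faithfully reconstruct, from $u$, $v$, and the stored flag, precisely the set of vertices that were removed. I would argue (as in Rule I) that once a candidate $(u,v)$ is processed, the forced domination by $u$ and/or $v$ guarantees that subsequently deleted vertices never interfere with the already-recorded gadgets, so the list $L_{II}$ never needs revision and the membership tests remain well-defined throughout. Verifying this non-interference property carefully, and confirming that the total number of gadgets is indeed $r_2$ so that the space bound is $\Oh{r_2 \log n}$, constitutes the crux of the argument.
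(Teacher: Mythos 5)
Your proposal follows essentially the same route as the paper's proof: iterate over the sets $N_i(u,v)$ with the logspace technique from Lemma~\ref{lemm:rule_i}, test reduction candidacy by counting $N_3(u,v)$ and checking single-vertex domination with nested loops, and replace physical deletion by a list of constant-size typed gadgets (the paper's cases 1.1--1.3 and 2 correspond exactly to your classification by whether $u$ alone, $v$ alone, or neither dominates $N_3(u,v)$), with deletion-membership tests performed by reconstructing each recorded reduction on demand. The concerns you flag (faithfully reconstructing the removed sets from the stored anchors and type flag, and non-interference of successive applications) are handled just as tersely in the paper's own proof, so your plan matches it in both substance and level of detail.
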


\begin{proof}
    We first show how to iterate over all reduction candidates $(u, v)$. For
    this we require to output the sets $N_{i}(u, v)$ for $i \in [3]$. This uses
    effectively the exact same technique as outputting the sets $N_{i}(u)$ in
    Rule I., and thus we refer to the proof of Lemma~\ref{lemm:rule_i} for
    details. Now, to check if a given pair of vertices $u, v \in V$ is a
    reduction candidate we first check if $N_3(u, v)$ contains $2$ or more
    vertices, and for all vertices $w \in N_2(u, v) \cup N_3(u, v)$ check if no
    $w$ dominates all vertices of $N_3(u, v)$. Clearly this can be checked using
    $O(\log n)$ bits and polynomial time. Now, let $(u, v)$ be the first
    reduction candidate found. Analogously to Rule I., we maintain a list $L$ of
    gadgets we insert to the graph during the application of the reduction rule.
    Initially $L$ is empty. Rule II. distinguishes between different case,
    outlined in the following. All cases of type $(1.x)$ for $x \in [3]$ assume
    that $N_3(u, v)$ can be dominated by a single vertex of $\{u, v\}$, which we
    will not explicitly state in the following. Case 2 then pertains to the
    situation that both $u$ and $v$ are contained in an optimal dominating set.
    Note that in the following we mention the removal of vertices from $G$, we
    do not explicitly carry out this removal, but later show how the addition of
    gadgets "virtually" deletes these vertices.

    \textbf{Case 1.1}: $N_3(u, v) \subseteq N(u)$ and $N(u, v) \subseteq N(v)$.
    Remove $N_3(u, v)$ and $N_2(u, v) \cap N(u) \cap N(v)$ from $G$ and add a
    gadget $g_{1.1}(u, v)$ consisting of two vertices $z, z'$ and edges $\{u,
    z\}, \{v, z\}, \{u, z'\}$ and $\{v, z'\}$ to $L$.

    \textbf{Case 1.2}: $N_3(u, v) \subseteq N(u)$ but not $N(u, v) \subseteq N(v)$.
    Remove $N_3(u, v)$ and $N_2(u, v) \cap N(u)$ from $G$ and add a
    gadget $g_{1.2}(u, v)$ consisting of a vertex $u'$ and edge $\{u, u'\}$
    to $G$.

    \textbf{Case 1.3}: $N_3(u, v) \subseteq N(v)$ but not $N(u, v) \subseteq N(u)$.
    Remove $N_3(u, v)$ and $N_2(u, v) \cap N(v)$ from $G$ and add a
    gadget $g_{1.3}(u, v)$ consisting of a vertex $v'$ and edge $\{v, v'\}$
    to $G$.

    \textbf{Case 2}: $N_3(u, v)$ can not be dominated by $u$ or $v$ alone.
    Remove $N_3(u, v)$ and $N_2(u, v) \cap N(w)$ from $G$ and
    add a gadget $g_3(u, v)$ to $L$ consisting of the vertices $u', v'$
    and edges $\{u, u'\}$ and $\{v, v'\}$.

    Storing each gadget requires $O(\log n)$ space and using polynomial time
    we can easily access the newly added vertices and edges introduced via 
    a gadget. Analogously, we can easily check if a vertex was removed
    due to reconstructing the reduction for each gadget we store. 
    Each gadget we construct means that there was a successful application
    of a reduction rule, thus we store in total $r_2$ gadgets. 
    An interface to the graph $G_{II}$ is directly given via the gadgets.
\end{proof}

Note that constructing $G_I$ from $G$ and then constructing $G_{II}$ from $G_I$
can result in Rule I. being applicable again in $G_{II}$, and when reducing
$G_{II}$ via Rule I. to a even further reduces graph $G'_{I}$ can again mean
that $G'_{I}$ can be reduced further via Rule II. Due to the fact that each
reduction process produces an interface that allows polynomial access time to
the graph it represents, \textit{chaining} these interfaces again produces
polynomial access times. And since each interface uses $O(\log n)$ bits per
single applied reduction rule, storing all these chained interfaces uses
$O(k \log n)$ bits. This is due to the fact that the number of applicable
reductions is at most $k$, as each reduction forces one or more
vertices to be contained in an optimal dominating set. Alber et al. showed that
a planar graph to which neither Rule I. nor Rule II. can be applied has $O(k)$
vertices. Together with our previously described technique of chaining interfaces,
we get the following theorem.

\begin{theorem}\label{thm:planardomkernel}
    Let $G=(V, E)$ be a planar graph. We can construct a kernel $G'$ of $G$ of
    size $O(k)$ in polynomial time using $O(k \log n)$ bits.
\end{theorem}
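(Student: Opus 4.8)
The plan is to realize Alber et al.'s fixpoint computation---apply Rule~I and Rule~II until neither is applicable---by \emph{chaining} the space-efficient interfaces furnished by Lemmas~\ref{lemm:rule_i} and~\ref{lemm:rule_ii}. Starting from $G$, I would first build the interface to $G_I$ on top of $G$, then the interface to $G_{II}$ on top of $G_I$, then (since Rule~I may again apply to $G_{II}$) a further Rule~I interface on top of $G_{II}$, and so on, each new interface treating the previous one as its input graph. The process stops once a full pass of each rule produces no new gadget. Correctness is inherited: each rule is safe (equivalence-preserving) by Alber et al.~\cite{AFN2004JACM}, and Lemmas~\ref{lemm:rule_i} and~\ref{lemm:rule_ii} guarantee that the interfaces faithfully simulate the rules, so the topmost interface represents a graph $G'$ to which neither rule applies. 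By the analysis of Alber et al.~\cite{AFN2004JACM}, such a fully reduced planar graph has $O(k)$ vertices, which yields the claimed kernel size.

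For the space bound, I would argue that the total number of gadgets stored across \emph{all} layers is $O(k)$. The point is that every successful application of either rule forces at least one further vertex to lie in an optimal dominating set of $G$; since an optimal dominating set has size $k$ and distinct applications force distinct vertices, there can be at most $k$ successful applications in total, hence $O(k)$ gadgets. As each gadget consists of a constant number of vertices and edges and is stored in $O(\log n)$ bits (Lemmas~\ref{lemm:rule_i} and~\ref{lemm:rule_ii}), the combined gadget lists occupy $O(k \log n)$ bits. Adding the $O(\log n)$ bits needed for on-the-fly access to an embedding via~\cite{DP2011TAMC}, the whole construction stays within $O(k \log n)$ bits.

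The main obstacle is the running time of the chained interface. A single neighbour or deletion query at a given layer is answered by issuing polynomially many queries to the layer immediately below it; done naively, nesting $R$ layers would multiply the per-query cost by a polynomial factor at each layer, yielding an $n^{O(R)}$ bound, which is not polynomial when $R = \Theta(k)$. To avoid this, I would rely on the locality of the reductions already exploited in Lemma~\ref{lemm:rule_i}: there, whether a vertex was virtually deleted is decided by examining the sets $N_i(\cdot)$ \emph{directly in $G$} rather than by recursing through intermediate graphs, because the deleted sets of distinct applications do not interfere with the retained closed neighbourhoods. Carrying the same observation across rounds and across both rules lets me keep a single flat list $L$ of all $O(k)$ gadgets and resolve every query against $G$ together with $L$, with only a polynomial overhead in $n$ and $\abs{L} = O(k)$ per query. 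This collapses the apparent $R$-fold nesting into one polynomial layer, so both the construction (which performs $O(k)$ reductions, each located by a polynomial scan) and every subsequent access to $G'$ run in polynomial time.

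Putting these together---safeness and the $O(k)$ size bound from~\cite{AFN2004JACM}, the $O(k \log n)$-bit storage of the gadget list, and the polynomial-time flat interface---establishes Theorem~\ref{thm:planardomkernel}. I expect the delicate verification to be precisely the last point: proving that the deletion/insertion semantics of one gadget can be checked against $G$ (plus a bounded amount of gadget information) independently of the order in which later gadgets were added, so that the flattened interface is simultaneously correct and polynomial-time.
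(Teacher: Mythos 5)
Your proposal follows the paper's route exactly: chain the interfaces of Lemmas~\ref{lemm:rule_i} and~\ref{lemm:rule_ii} in alternation, bound the total number of stored gadgets by $O(k)$ because every successful rule application forces a vertex into an optimal dominating set, and invoke Alber et al.~\cite{AFN2004JACM} for the $O(k)$ bound on the size of the fully reduced graph. The one place you diverge is the time analysis, and there you are ahead of the paper rather than behind it: the paper's proof is precisely the naive chaining argument you reject. It asserts that because each single interface supports polynomial-time access, chaining these interfaces ``again produces polynomial access times,'' with no accounting for the nesting depth $R$. Since a query at one layer triggers polynomially many queries at the layer below, the honest bound is $n^{O(R)}$, and the only bound on $R$ available is the $O(k)$ bound on the total number of rule applications---so the compounding issue you flag is a genuine gap in the paper's own proof, not an artifact of your reading.

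Your proposed repair---a single flat gadget list $L$, with every query resolved against $G$ together with $L$---is the natural one, and it is exactly the mechanism Lemma~\ref{lemm:rule_i} already uses \emph{within} one exhaustive round of Rule I, where the proof shows that the deleted sets of distinct applications avoid the retained closed neighbourhoods, so the flat test ``$u \in N_2(v) \cup N_3(v)$ computed in $G$'' is sound. But the point you yourself single out as delicate is genuinely open: after a Rule II round, the sets $N_i(\cdot)$ relevant to a later Rule I application are defined relative to the partially reduced graph, not to $G$, and no analogue of that disjointness/locality argument is proved across rounds or across the two different rules. Neither your sketch nor the paper closes this step; until one either proves the cross-round locality or bounds the number of alternation rounds by a constant, the polynomial-time claim of Theorem~\ref{thm:planardomkernel} rests on an unproven assertion. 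In short: you reproduce the paper's argument, correctly diagnose its weakest link, and propose the right kind of fix, but the fix itself is the part that still needs a proof.
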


\section{Linear Kernels for $\pDS$ and $\pVC$ on Planar Graphs}

In this section, we put together all the components developed so far to obtain our kernelization algorithms. Consider a planar graph $G$, a region decomposition for $G$ and a region $R$. Denote by $\oper{B}(R)$ the set of vertices on the boundary of $R$ and by $\oper{I}(R)$, the set of vertices in the interior of $R$, i.e.\ $\oper{I}(R) = \V{R} \setminus \oper{B}(R)$. For all subsets $C \subseteq \oper{B}(R)$, define $\Nbr{R}{C}$ to be the set of vertices in $\V{R}$ dominated by $C$.

\subsection{Kernelizing $\pDS$}
Let $G$ be a planar instance of $\pDS$, $D$ be a dominating set for $G$, and consider a maximal region decomposition $\mathcal{R}$ of $G$ with respect to $D$. The number of vertices in $G$ not contained in the regions can be bounded as follows.
\begin{proposition}[Lokstanov et al.~\cite{LMS2011TCS}, Proposition 2]\label{prop:reg_ext_ds}
  For any dominating set $S$ for $G$, the number of vertices not contained in the regions of a maximal $S$-region decomposition of $G$ is at most $170 \abs{S}$.
\end{proposition}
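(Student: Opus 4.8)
The plan is to bound the set $X := \V{G} \setminus V(\mathcal{R})$ of vertices left uncovered by the maximal region decomposition through a planarity-plus-maximality argument. First I would record the parameters for \pDS: here $c_V = c_E = 1$, so every region boundary is formed by two paths of length at most $c_V + c_E + 1 = 3$, every vertex inside a region is within distance $1$ of an anchor, and (by the region-count proposition, with $c_V = 1$) the decomposition has at most $3\abs{S} - 6$ regions. Since $S$ is a dominating set, every $x \in X$ also lies within distance $1$ of some anchor $s \in S$; the whole difficulty is that these dominated-but-uncovered vertices must nonetheless be few.

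Next I would pass to an auxiliary plane graph $H$ on the vertex set $S$, obtained from the given embedding of $G$ by contracting each region $R(u, v) \in \mathcal{R}$ to a single edge between its anchors $u$ and $v$. Because distinct regions have pairwise disjoint interiors and their anchors lie in $S$, this contraction can be carried out without crossings, so $H$ is a planar (multi)graph; its edge count is bounded by the number of regions, so by Euler's formula it has $\Oh{\abs{S}}$ edges and $\Oh{\abs{S}}$ faces. Each uncovered vertex of $X$ is drawn inside exactly one face of $H$, so it suffices to bound, by a constant, the number of vertices of $X$ attributable to each face together with its bounding anchors.

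The heart of the argument, and the step I expect to be the main obstacle, is turning maximality of $\mathcal{R}$ into such a per-face constant. The idea is contrapositive: I would assume a face $f$ contains many vertices of $X$, all dominated by anchors on its boundary $\partial f$, and show that one can then select two anchors $u, v$ on $\partial f$ and exhibit two length-at-most-$3$ paths between them enclosing a nonempty set of these uncovered vertices, with every enclosed vertex within distance $1$ of $u$ or $v$ and no interior anchor from $S$---that is, a new region $R(u, v)$ disjoint from the interiors of $\mathcal{R}$, contradicting maximality. Making this precise requires classifying the vertices of $X$ lying in $f$ by which anchor dominates them and by their position relative to the bounding paths, then checking that once any anchor accumulates more than a fixed number of such private uncovered neighbours, a valid addable region is forced. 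The degenerate situations (overlapping region boundaries, the outer face, anchors of small degree, and shared boundary paths) must be handled separately, and it is precisely the worst case over this case analysis that fixes the admissible per-anchor count.

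Finally I would sum the per-face bounds. Since $H$ has $\Oh{\abs{S}}$ faces and each anchor lies on boundedly many of them, the total number of uncovered vertices is $\Oh{\abs{S}}$; carefully propagating the constants from the maximality case analysis through this summation yields the explicit bound $170\abs{S}$. The only genuinely delicate point is the constant-counting in the previous paragraph---the planarity bookkeeping here is routine once the per-region contribution has been pinned down.
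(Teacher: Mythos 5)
Note first that the paper does not prove this proposition at all: it is imported from Lokshtanov et al., and the constant $170$ traces back to the kernelization analysis of Alber, Fellows and Niedermeier. So your attempt has to be judged against that literature argument, and there it diverges fatally. Your ``heart of the argument'' tries to extract a per-face constant bound from \emph{maximality of the decomposition plus planarity alone}, via the claim that an anchor with too many private uncovered neighbours forces an addable region. That claim is false. A new region needs \emph{two distinct} anchors joined by two walks of length at most $c_V + c_E + 1 = 3$ whose union encloses the vertices in question; if the nearest other anchor is at distance $2$ or more from $d$, no such enclosing pair of walks need exist, no matter how many uncovered neighbours $d$ has. Concretely: let $G$ consist of a vertex $c_1$ with $m$ pendant leaves and a path $c_1 p_1 p_2 p_3 p_4 c_2$. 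Then $D = \{c_1, p_2, c_2\}$ is a minimum dominating set, the only admissible regions are the degenerate ones bounded by the unique short walks $c_1 p_1 p_2$ and $p_2 p_3 p_4 c_2$, and the decomposition consisting of these two regions is maximal yet leaves all $m$ leaves of $c_1$ uncovered, with $m$ arbitrarily large compared to $170\,\abs{D} = 510$. No refinement of your case analysis can repair this, because the statement you are trying to prove is simply not a consequence of maximality and planarity.

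What rescues the bound in the literature is a hypothesis your sketch omits (and which, one should note, the citation in this paper also suppresses): the graph must be \emph{reduced} with respect to the two reduction rules of Alber et al.\ (the Rules I and II of Section~\ref{sect:simple_ds}). In the example above, Rule I applies to $c_1$ (its leaves lie in $N_3(c_1)$) and would delete them; it is precisely irreducibility under Rules I and II --- an anchor-by-anchor and anchor-pair-by-anchor-pair structural analysis, not a forced-new-region argument --- that bounds the vertices left outside all regions, and the constant $170$ emerges from that case analysis rather than from ``propagating constants'' through an Euler-formula count. The planarity bookkeeping in your first and last paragraphs is fine (contracting regions to anchor--anchor edges is essentially how the $3\abs{S} - 6$ bound on the number of regions is proved, though your side remark that each anchor lies on boundedly many faces is also wrong in general), but the middle of your proof needs the reduction rules as an additional hypothesis, and without them it cannot be completed.
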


We now show how one can reduce the regions to constant size.

\vspace{-1ex}
\begin{quote}
	\textsf{Rule DomSetReg.} In every region $R$, determine, for all subsets $C \subseteq \oper{B}(R)$, the sets
  \vspace{-1ex}
  \begin{align*}
    K_C = &\brc{(v, w)}\{v \in \oper{I}(R),\ v\ \text{dominates}\ \V{R} \setminus \Nbr{R}{C}\ \text{and}\\
          &w\ \text{is a neighbour of}\ v\ \text{in}\ \V{R} \setminus \Nbr{R}{C}\}.
  \end{align*}

  \vspace{-1.3ex}
  For each subset $C \subseteq \oper{B}(R)$ such that $K_C$ is non-empty, arbitrarily choose exactly one pair $(v_C, w_C)$ from $K_C$. Let $T$ be the set comprising the chosen vertices $v_C, w_C$. Replace $R$ with $R[T \cup \oper{B}(R)]$.
\end{quote}

\begin{lemma}\label{lemm:redn_ds_reg}
  Applying the rule \textsf{RuleDomSetReg} to $G$ produces an equivalent graph $G'$ with the same solution size such that each region in $G'$ contains at most $134$ vertices.
\end{lemma}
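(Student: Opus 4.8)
The plan is to prove the two assertions separately: first the uniform bound of $134$ vertices per reduced region, which is a direct counting argument, and then the equivalence of optimal solution sizes (write $\gamma(\cdot)$ for the minimum size of a dominating set), which carries the real content. For the size bound I would specialize Definition~\ref{defn:region} to $\pDS$, where $c_V = c_E = 1$. The boundary of any region $R$ is formed by two paths of length at most $c_V + c_E + 1 = 3$ between the anchors $u$ and $v$; each such path has at most $4$ vertices and the two share the endpoints $u, v$, so $\abs{\oper{B}(R)} \le 6$. Hence there are at most $2^{6} = 64$ subsets $C \subseteq \oper{B}(R)$, and the rule contributes to $T$ at most the pair $v_C, w_C$ for each $C$ with $K_C \ne \emptyset$. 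Therefore $\abs{T} \le 2 \cdot 64 = 128$, and the reduced region $R[T \cup \oper{B}(R)]$ has at most $\abs{T} + \abs{\oper{B}(R)} \le 128 + 6 = 134$ vertices.

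For the equivalence, the structural lever I would exploit is that \emph{every} vertex of $\V{R}$, boundary and interior alike, lies within distance $c_V = 1$ of an anchor, so the two anchors $u, v$ jointly dominate all of $\V{R}$; moreover, since the region is a closed set in the planar embedding, every interior vertex has all its neighbours inside $\V{R}$ and so contributes nothing to domination outside $R$. I would first use this to put an optimal dominating set $D$ of $G$ into a normal form with at most one interior vertex per region: if $D$ contained two or more interior vertices of some $R$, replacing them by $\brc{u, v}$ still dominates all of $\V{R}$ without increasing $\abs{D}$. With this normal form, I would build a size-preserving correspondence region by region. For a region with $D \cap \oper{I}(R) = \emptyset$, the set $C = D \cap \oper{B}(R)$ already dominates the whole interior, and the same solution works verbatim in $G'$. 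For a region using a single interior vertex, that vertex dominates the residual $\V{R} \setminus \Nbr{R}{C}$, so $K_C \ne \emptyset$; the kept representative $v_C$ can be substituted for it, since $v_C$ dominates $\V{R'} \setminus \Nbr{R'}{C} = \brn{\V{R} \setminus \Nbr{R}{C}} \cap \V{R'}$, using that $R'$ is induced and $\Nbr{R'}{C} = \Nbr{R}{C} \cap \V{R'}$.

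The witness $w_C$ plays the complementary role in the reverse direction: because $w_C$ lies in the residual it is not dominated by $C$ in $G'$, so any dominating set of $G'$ that takes only boundary vertices of $R$ from $C$ is forced to spend an interior vertex to cover $w_C$, exactly mirroring the interior vertex spent in $G$. This is what I would use to show $\gamma(G) \le \gamma(G')$ and thereby rule out over-reduction, obtaining $\gamma(G') = \gamma(G)$.

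I expect the main obstacle to be the bookkeeping in this exchange argument rather than the counting. One must verify that every interior vertex deleted from a region remains dominated after the substitution, and one must handle residual vertices that sit on $\oper{B}(R)$ and may be dominated from \emph{outside} the region rather than by an interior vertex. Reconciling the rule's requirement that $v_C$ dominate the entire residual (including such boundary vertices) with solutions that cover those boundary vertices externally is the delicate point; the anchor-domination fact, which collapses the number of relevant interior vertices to at most one per region, is precisely what keeps this case analysis finite and lets the per-region correspondence go through.
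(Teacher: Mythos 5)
Your proposal is correct and follows essentially the same route as the paper's own proof: the same exchange argument replacing multiple interior vertices of a region by its two anchors, the same single-interior-vertex case analysis built around the preserved pair $(v_C, w_C)$, and the identical count $6 + 2 \cdot 2^6 = 134$. The only notable difference is that you explicitly flag the subtlety of residual boundary vertices being dominated from outside the region, a point the paper's proof silently glosses over.
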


\begin{proof}
  Consider a minimum dominating set $S$ for $G$. Observe that any vertex in the interior of a region $R(u, v)$ can only dominate vertices in $R$. Thus, minimum dominating sets for $G$ are not forced to contain any more than one vertex from $\oper{I}(R)$: if $S$ contains more than one vertex from $\oper{I}(R)$, one can replace $S$ with $S' = (S \setminus \oper{I}(R)) \cup \brc{u, v}$, which is also a dominating set, and no larger than $S$.

  We now consider the case where $S$ contains exactly one vertex $p \in \oper{I}(R)$. Let $C = \oper{B}(R) \cap S$. The set of vertices in $R$ dominated by $C$ is precisely $\Nbr{R}{C}$, and all other vertices in $R$ (comprising $\V{R} \setminus \Nbr{R}{C}$) are dominated by $p$. Note that the rule \textsf{DomSetReg} preserves exactly one vertex $q \in \oper{I}(R)$ that dominates $\V{R} \setminus \Nbr{R}{C}$, and a neighbour of $q$ in $\V{R} \setminus \Nbr{R}{C}$. This ensures that if a minimum dominating set for $G$ contains some vertex in $\oper{I}(R)$ that dominates $\V{R} \setminus \Nbr{R}{C}$, then in the reduced graph $G'$, there is also a minimum dominating set that contains some vertex in $\oper{I}(R)$ that dominates $\V{R} \setminus \Nbr{R}{C}$, and vice versa. Thus, $G'$ preserves solution size.

  Observe that for any region $R$, one has $\abs{\oper{B}(R)} \leq 6$, so $\oper{B}(R)$ has at most $64$ subsets. Each such subset $C$ causes at most two vertices in $\oper{I}(R)$ to be preserved by the rule \textsf{DomSetReg}. Other than these vertices and the the boundary of $R$, the rule discards all vertices in $R$. Thus, in $G'$, $R$'s replacement $R[T \cup \oper{B}(R)]$, has at most $6 + 64 \times 2 = 134$ vertices.
\end{proof}

Using the above ingredients, we now show how $\pDS$ can be kernelized.

\begin{theorem}
  One can kernelize $n$-vertex planar instances of $\pDS$ with solution size $k$ into instances with no more than $1146k$ vertices in polynomial time and space $\Oh{(k + \log{n} + \sqrt{n}) \log{n}}$.
\end{theorem}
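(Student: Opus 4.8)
The plan is to chain together the three space-efficient ingredients built up in the preceding sections, following the region-based kernelization of Alber et al. First I would invoke Theorem~\ref{thrm:apx_plan_dsvc} with $\epsilon = 1$ to compute a $2$-approximate dominating set $S$ for $G$; this runs in time $n^{\Oh{1}}$ and space $\Oh{(\log n + \sqrt{n})\log n}$, and since $\abs{S} \leq 2k$ the set can be stored explicitly in $\Oh{k \log n}$ bits for random access. Next I would hand $S$ to the procedure of Lemma~\ref{lemm:reg_decomp} (with the $\pDS$ constants $c_V = c_E = 1$) to obtain, in polynomial time and $\Oh{\abs{S}\log n} = \Oh{k \log n}$ space, a compressed maximal $S$-region decomposition $\mathcal{R}$ together with an embedding of $G$. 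By the Guo--Niedermeier bound, $\mathcal{R}$ contains at most $3\abs{S} - 6 \leq 6k$ regions.

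The core of the argument is then to apply the rule \textsf{DomSetReg} region by region and emit the resulting kernel to the output stream. For each region $R \in \mathcal{R}$ I would reconstruct $R$ on the fly from its stored compressed boundary (exactly as in Lemma~\ref{lemm:reg_decomp}), enumerate the at-most-$64$ subsets $C \subseteq \oper{B}(R)$, and for each nonempty $K_C$ select a single pair $(v_C, w_C)$; the set $T$ of kept interior vertices together with $\oper{B}(R)$ yields the replacement $R[T \cup \oper{B}(R)]$, which by Lemma~\ref{lemm:redn_ds_reg} has at most $134$ vertices and preserves solution size. Only $\Oh{\log n}$ bits of working memory per region are needed beyond the stored decomposition. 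To emit the vertices lying outside every region, I would iterate over all $v \in \V{G}$ and, reusing the region-membership test of Lemma~\ref{lemm:reg_decomp}, output each $v$ not contained in any region of $\mathcal{R}$ along with its incident edges; by Proposition~\ref{prop:reg_ext_ds} there are at most $170\abs{S} \leq 340k$ such vertices.

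Bounding the kernel size is then just bookkeeping. Summing the (possibly overlapping, hence conservatively over-counted) contributions of the dominating set, the non-region vertices, and the reduced regions gives at most $2k + 340k + 804k = 1146k$ vertices, where $804k = 134 \cdot 6k$ comes from at most $134$ vertices across at most $6k$ regions. Equivalence and solution-size preservation follow from Lemma~\ref{lemm:redn_ds_reg} applied independently to each region, which is legitimate because an interior vertex of a region $R(u,v)$ can only dominate vertices of $R$, so the reductions do not interfere; the non-region vertices are carried over unchanged. For the resource bounds I would combine the three stages: the approximation needs $\Oh{(\log n + \sqrt{n})\log n}$ space, storing $S$ and $\mathcal{R}$ needs $\Oh{k \log n}$ space, and the reduction adds only $\Oh{\log n}$ working space, for a total of $\Oh{(k + \log n + \sqrt{n})\log n}$ in polynomial time.

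The main obstacle I expect is the genuinely space-efficient realization of \textsf{DomSetReg}: computing $\Nbr{R}{C}$ and testing the domination condition defining $K_C$ must be done purely from the compressed representation of $R$ and the graph interface, without ever materializing all of $\V{R}$, so that the per-region overhead stays at $\Oh{\log n}$ bits and the kept set $T$ is the only data retained. A secondary subtlety is producing a consistent output graph when region boundaries overlap and when non-region vertices have edges into reduced regions; I would resolve this by treating the stored boundaries together with the kept pairs as the authoritative vertex set and streaming all edges through the same on-the-fly reconstruction, so that no vertex or edge is emitted twice.
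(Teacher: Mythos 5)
Your proposal is correct and follows essentially the same route as the paper: compute a $2$-approximate dominating set via the approximation scheme with $\epsilon = 1$, feed it to the compressed region-decomposition procedure of Lemma~\ref{lemm:reg_decomp} with $c_V = c_E = 1$, apply \textsf{DomSetReg} region by region, and account for the kernel size as $2k + 134 \cdot 6k + 170 \cdot 2k = 1146k$ with space $\Oh{(k + \log{n} + \sqrt{n})\log{n}}$, exactly as the paper does. Your write-up is in fact somewhat more careful than the paper's on implementation details (on-the-fly reconstruction from compressed boundaries, streaming the outside-region vertices, and handling edges between reduced regions and outside vertices), and your citation of Theorem~\ref{thrm:apx_plan_dsvc} for the approximation step is the more accurate reference than the paper's pointer to Lemma~\ref{lemm:plan_bd_width}.
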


\begin{proof}
  Consider the following algorithm, which takes as input an $n$-vertex planar graph and a number $k \in \N$.
  \begin{enumerate}
    \item Compute a $2$-approximate dominating set $D$ for $G$ using the procedure of Lemma~\ref{lemm:plan_bd_width}.
    \item With $G$, $D$ and $c_V = c_E = 1$ as input, compute a maximal region decomposition $\mathcal{R}$ of $G$ with respect to $D$ using the procedure of Lemma~\ref{lemm:reg_decomp}.
    \item Reduce the regions in $\mathcal{R}$ using the rule \textsf{DomSetReg}, and output the reduced graph $G'$.
  \end{enumerate}

  In the above algorithm, the output stream from each step is used in the next step as input, which avoids the need to store the outputs explicitly. Observe that the reduction rule only involves enumerating constant-size subsets $C$ of the boundary of each region, and performing simple checks on the neighbourhood of $C$. By straightforward arguments, this uses logarithmic space. For the approximation, we use $\epsilon = 1$ and have $\abs{D} \leq 2k$. Using the resource bounds from Lemmas~\ref{lemm:plan_bd_width} and~\ref{lemm:reg_decomp}, we have the following resource bounds for the algorithm: polynomial time and $\Oh{(\log{n} + \sqrt{n}) \log{n}} + \Oh{\abs{D} \log{n}} = \Oh{(k + \log{n} + \sqrt{n}) \log{n}}$ space.

  Now observe that $G'$ consists of $D$, the reduced regions, and vertices outside of the regions. The set $D$ has at most $2k$ vertices, each reduced region has at most $134$ vertices (Lemma~\ref{lemm:redn_ds_reg}), there are at most $170 \abs{D}$ vertices outside of the regions (Proposition~\ref{prop:reg_ext_ds}). Thus, $G'$ has at most $2k + 134 \times 3 \times 2k + 170 \times 2k = 1146k$ vertices. 
\end{proof}

\subsection{Kernelizing $\pVC$}
Let $G$ be a planar instance of $\pVC$, $C$ be a vertex cover for $G$, and consider a maximal region decomposition $\mathcal{R}$ of $G$ with respect to $C$.

To arrive at the result, we use the following reduction rules.

\vspace{-1ex}
\begin{quote}
	\textsf{Rule VtxCovReg.} In every region $R = R(u, v)$ search for:
  \vspace{-1.2ex}
  \begin{itemize}[leftmargin=2.5em]
    \item a vertex in $\oper{I}(R)$ incident with $u$;
    \item a vertex in $\oper{I}(R)$ incident with $v$; and
    \item a vertex in $\oper{I}(R)$ incident with both $u$ and $v$.
  \end{itemize}
  \vspace{-0.5ex}
  Let $T$ be the set comprising the vertices found. Replace $R$ with $R[T \cup \oper{B}(R)]$.
\end{quote}

\begin{lemma}\label{lemm:redn_vc_reg}
  Applying the rule \textsf{RuleVtxCovReg} to $G$ produces an equivalent graph $G'$ with the same solution size such that each region in $G'$ contains at most $7$ vertices.
\end{lemma}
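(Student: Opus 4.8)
The plan is to show two things for each region $R = R(u,v)$: first, that the reduced region $R[T \cup \oper{B}(R)]$ has at most $7$ vertices, and second, that replacing $R$ with this reduced region preserves the minimum vertex cover size of $G$. The vertex-count bound is the easy direction. By Definition~\ref{defn:region}, the boundary of a region is formed by two paths of length at most $c_V + c_E + 1$; for $\pVC$ we have $c_V = 1$ and $c_E = 0$, so each boundary path has length at most $2$, giving $\abs{\oper{B}(R)} \leq 4$ (the two internal path-vertices plus the anchors $u,v$, with possible overlaps only reducing this count). Since the rule \textsf{VtxCovReg} retains at most three interior vertices (one incident with $u$, one with $v$, one with both), the reduced region has at most $4 + 3 = 7$ vertices.

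The substantive part is the correctness of the reduction, i.e.\ that solution size is preserved. First I would establish the structural fact that an interior vertex $w \in \oper{I}(R)$ can only have edges to vertices inside $R$: every vertex strictly inside $R$ has distance at most $c_V = 1$ to $u$ or $v$, and since $c_E = 0$, every edge of $G$ must be covered by a vertex at distance $0$ from one of its endpoints, which constrains how interior vertices connect to the rest of the graph. The key consequence is that the only edges an interior vertex participates in (that matter for covering purposes outside $R$) are edges to $u$ and $v$; any minimum vertex cover restricted to the interior of $R$ need only worry about covering edges incident with $u$ and $v$ and edges wholly inside $R$. I would then argue by an exchange argument: given a minimum vertex cover $C^*$ for $G$, one can assume without loss of generality that $C^*$ uses interior vertices of $R$ only to cover edges from interior vertices to $u$ or $v$, and that $u, v$ themselves are placed in the cover in the ``cheapest'' configuration. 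The three retained vertices exactly capture the three possible ways an interior vertex can be forced into the cover to handle an edge to $u$, to $v$, or to both, so the reduced graph admits a cover of the same size, and conversely any cover of the reduced graph extends to one of $G$ of equal size.

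The main obstacle I anticipate is the exchange argument ensuring equivalence \emph{in both directions} while accounting for edges incident with the anchors $u$ and $v$, since these anchors are shared across regions and between the regions and the rest of $G$. One must verify that discarding all but the three chosen interior vertices does not destroy any edge that a minimum vertex cover genuinely needed, and that no spurious savings are introduced in $G'$. Concretely, I would show that for any edge between an interior vertex and an anchor, the retained ``witness'' vertex of the appropriate type (incident with $u$, with $v$, or with both) suffices to reproduce whatever covering role the discarded interior vertices played, because interior vertices are interchangeable for covering purposes once their adjacency pattern to $\{u,v\}$ is fixed. Handling the case distinction cleanly—particularly when an optimal cover would include an anchor versus an interior vertex—is where the care is needed, but it mirrors the argument in Lemma~\ref{lemm:redn_ds_reg} for $\pDS$ and follows the same template of reasoning about which vertices an interior vertex can dominate or cover.
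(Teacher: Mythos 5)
Your proposal is correct and follows essentially the same route as the paper's own proof: both rest on the observation that interior vertices of a region only participate in edges inside the region, an exchange argument replacing interior cover-vertices by $\brc{u,v}$ (so at most one interior vertex is ever needed), the fact that the three retained witnesses realize the three possible adjacency patterns of an interior vertex to the anchors, and the count $\abs{\oper{B}(R)} + 3 \leq 4 + 3 = 7$. There is no genuinely different decomposition or key lemma to compare, so nothing further needs to be said.
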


\begin{proof}
  Consider a minimum vertex cover $S$ for $G$. Observe that any vertex in the interior of a region $R(u, v)$ only covers edges in $R[\brc{u, v} \cup \oper{I}(R)]$. Thus, minimum vertex covers for $G$ are not forced to contain any more than one vertex from $\oper{I}(R)$: if $S$ contains more than one vertex from $\oper{I}(R)$, one can replace $S$ with $S' = (S \setminus \oper{I}(R)) \cup \brc{u, v}$, which is also a vertex cover, and no larger than $S$. 

  We now consider the case where $S$ contains exactly one vertex $w \in \oper{I}(R)$. Because all edges in $R(u, v)$ must be covered by $S$, $w$ is incident with at least one vertex in $\brc{u, v}$. For each of the three possibilities, the rule \textsf{VtxCovReg} preserves an equivalent vertex. This ensures that a minimum vertex cover for $G$ contains a vertex $p \in \oper{I}(R)$ with $\Nbr{R}{\brc{p}} = C$ if and only if some minimum vertex cover in $G'$ also contains some vertex $q \in \oper{I}(R)$ with $\Nbr{R}{\brc{q}} = C$. Thus, $G'$ preserves solution size.
  
  Observe that for any region $R$, one has $\abs{\oper{B}(R)} \leq 4$, and the rule \textsf{VtxCovReg} preserves at most three vertices in $\oper{I}(R)$. Thus, in $G'$, $R$'s replacement has at most $4 + 3 = 7$ vertices.
\end{proof}

\begin{quote}
	\textsf{Rule VtxCovCleanup.} For every $v \in C$, search for a vertex among those not contained in regions which is incident with $v$. Let $T$ be the set comprising the vertices found. Remove all vertices from $G$ that are not contained in the regions or in $T$.
\end{quote}

The following lemma can be established by straightforward arguments, so we omit the proof.

\begin{lemma}\label{lemm:redn_vc_clean}
  Applying the rule \textsf{RuleVtxCovCleanup} to $G$ produces an equivalent graph $G'$ that has at most $\abs{C}$ vertices outside the regions.
\end{lemma}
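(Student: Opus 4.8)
The plan is to verify the two assertions of the lemma in turn: first the cardinality bound on the vertices left outside the regions, and then the preservation of the optimal solution size. The cardinality bound is immediate from the rule: for every $v \in C$ at most one vertex is placed into $T$, so $\abs{T} \le \abs{C}$, and after the rule deletes every vertex lying neither in a region nor in $T$, the vertices of $G'$ outside the regions are exactly those of $T$. Hence $G'$ has at most $\abs{C}$ vertices outside its regions. All the real work is in the equivalence claim, $\oper{vc}(G) = \oper{vc}(G')$ (writing $\oper{vc}$ for minimum vertex-cover size).

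\textbf{Structure of the discarded vertices.} The key preliminary step I would establish is that every discarded external vertex is redundant, and for this I would pin down its local structure. Recall that for $\pVC$ the constants are $c_V = 1$ and $c_E = 0$, so the edge condition of the distance property is exactly that every edge has an endpoint in $C$. Let $x$ be a vertex lying in no region with $x \notin C$. Then $\Nb{x} \subseteq C$, since any neighbour joined to $x \notin C$ by an edge must itself be in $C$. I claim $x$ has a single neighbour. If $x$ were adjacent to two distinct vertices $c, c' \in C$, the length-$2$ walk $c$–$x$–$c'$, used as both of the (not necessarily disjoint) boundary paths, would form a valid region $R(c, c')$: its only inside vertices are $c, x, c'$, each within distance $c_V = 1$ of an anchor, and it contains no vertex of $C$ strictly inside. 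Since $x$ currently lies in no region, adjoining this region to $\mathcal{R}$ would strictly enlarge $V(\mathcal{R})$, contradicting maximality. Hence $\abs{\Nb{x} \cap C} \le 1$; combined with the vertex condition of the distance property (which forces $x$ to have a neighbour in $C$), this makes $x$ a pendant attached to a single $C$-vertex. The same degenerate-region argument applied to the edge $cc'$ shows that a discarded $C$-vertex outside the regions has no neighbour in $C$.

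\textbf{The two inequalities.} Since $G'$ is the induced subgraph of $G$ on the surviving vertices, restricting any vertex cover of $G$ to $V(G')$ yields a vertex cover of $G'$ of no greater size, giving $\oper{vc}(G') \le \oper{vc}(G)$. For the reverse inequality I would take a minimum vertex cover $S'$ of $G'$ and show it extends to $G$ without growing. The only edges of $G$ absent from $G'$ are the pendant edges $\{x, c\}$ of discarded pendants $x$. For each $C$-vertex $c$ that lost a pendant, the rule retained a witness pendant $t_c \in T$, so the surviving edge $\{t_c, c\}$ is covered by $S'$; whenever $t_c \in S'$ but $c \notin S'$, swapping $t_c$ for $c$ is legal (as $t_c$ is a pendant) and keeps the size unchanged. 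Performing these independent swaps produces a cover $S''$ with $\abs{S''} = \abs{S'}$ that contains every such $c$, and $S''$ then already covers all the missing pendant edges $\{x, c\}$, so $S''$ is a vertex cover of $G$. This gives $\oper{vc}(G) \le \abs{S''} = \oper{vc}(G')$, and equality follows.

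\textbf{Main obstacle.} The step that carries the argument is the structural claim that each discarded external vertex is a pendant; this is precisely where maximality of the region decomposition is used, through the degenerate two-path region, and it is what justifies retaining only one witness per $C$-vertex. The accompanying delicate point is confirming that for every discarded pendant edge a suitable endpoint survives and can be forced into an optimal cover via the witness; the swap arguments are then routine. The remaining bookkeeping — the cardinality bound and the easy inequality — is entirely direct.
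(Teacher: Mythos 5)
The paper offers no proof of this lemma at all---it is dismissed as ``straightforward''---so your attempt can only be judged on its own merits. Your structural analysis is correct and is clearly the intended core of any proof: using a degenerate two-path region $R(c,c')$ and the maximality of $\mathcal{R}$, every non-$C$ vertex outside all regions is indeed a pendant attached to a single $C$-vertex, and a $C$-vertex outside all regions has no neighbour in $C$. The cardinality bound and the inequality $\oper{vc}(G') \le \oper{vc}(G)$ are also fine.

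The gap is in the reverse inequality, and it concerns exactly the vertices your own structural claim uncovers: $C$-vertices lying outside every region. Under the rule as literally stated, such a vertex $c$ is deleted---it is in no region, and since it has no $C$-neighbour it is never found by the search for any $v \in C$, hence $c \notin T$. Consequently the edges $\{x,c\}$ to its pendant neighbours, including the witness edge $\{t_c,c\}$, do \emph{not} survive into $G'$, so your assertions that ``the only edges of $G$ absent from $G'$ are the pendant edges of discarded pendants'' and that ``the surviving edge $\{t_c,c\}$ is covered by $S'$'' both fail, and the swap is unavailable. This is not a repairable detail within the literal rule: if $G$ has a star component whose centre $c$ is its only $C$-vertex (e.g.\ $\abs{C}=1$, or a $K_{1,3}$ component of a larger graph), then no region meets that component, the rule deletes $c$ and all but one leaf, and $\oper{vc}(G') = \oper{vc}(G) - 1$, so $(G,k)$ and $(G',k)$ are inequivalent for $k = \oper{vc}(G')$. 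The resolution---consistent with the subsequent theorem's accounting, which explicitly keeps ``$C$, the reduced regions, and vertices outside of the regions''---is to read the cleanup rule as never deleting vertices of $C$. Under that amendment every discarded vertex is a pendant whose $C$-endpoint survives, the witness edge exists, and your swap argument completes the proof; you should state this amendment (or compensate by decreasing the parameter for each deleted $C$-vertex), since as written the equivalence claim you are proving is false.
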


We now use the reduction rules to arrive at our result.

\begin{theorem}
  One can kernelize $n$-vertex planar instances of $\pVC$ with solution size $k$ into instances with no more than $46k$ vertices in polynomial time and space $\Oh{(k + \log{n} + \sqrt{n}) \log{n}}$.
\end{theorem}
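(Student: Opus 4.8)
The plan is to follow the same three-step structure as the \pDS kernelization theorem, but with the \pVC-specific reduction rules and the \pVC distance constants $c_V = 1$, $c_E = 0$ (as recorded in the remark in Section~\ref{ssct:regions}). First I would state the algorithm explicitly: (1) compute a $2$-approximate vertex cover $C$ for $G$ using the scheme of Theorem~\ref{thrm:apx_plan_dsvc} with $\epsilon = 1$, so that $\abs{C} \leq 2k$; (2) feed $G$, $C$, and the constants $c_V = 1$, $c_E = 0$ into the region-decomposition procedure of Lemma~\ref{lemm:reg_decomp} to obtain a maximal compressed $C$-region decomposition $\mathcal{R}$; and (3) reduce each region using \textsf{Rule VtxCovReg} and then apply \textsf{Rule VtxCovCleanup} to the vertices outside the regions, outputting the reduced graph $G'$.

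For correctness, I would invoke the two reduction-rule lemmas directly. By Lemma~\ref{lemm:redn_vc_reg}, each region in $G'$ is reduced to at most $7$ vertices while preserving the solution size, and by Lemma~\ref{lemm:redn_vc_clean}, the cleanup rule leaves at most $\abs{C}$ vertices outside the regions, again preserving equivalence. Since both rules preserve the minimum vertex-cover size, $G'$ is an equivalent instance, i.e.\ a genuine kernel.

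The vertex-count bound is then a matter of arithmetic. The reduced graph $G'$ consists of the set $C$ itself, the reduced regions, and the surviving vertices outside the regions. We have $\abs{C} \leq 2k$. By the proposition bounding the number of regions in a maximal region decomposition (with the \pVC constant $c_V = 1$), there are at most $3\abs{C} - 6 \leq 3\abs{C}$ regions, each contributing at most $7$ vertices after reduction, hence at most $7 \times 3 \times 2k = 42k$ region vertices; and by Lemma~\ref{lemm:redn_vc_clean}, at most $\abs{C} \leq 2k$ vertices survive outside the regions. Summing gives at most $2k + 42k + 2k = 46k$ vertices, matching the claimed bound. (One should double-check whether the anchor vertices of $C$ are counted within the region totals or separately, to make sure the constant lands exactly at $46$ rather than being slightly loose; this is the kind of bookkeeping detail that determines the final multiplicative constant.)

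For the resource bounds, I would argue exactly as in the \pDS theorem. The approximation step costs $n^{\Oh{1}}$ time and $\Oh{(\log{n} + \sqrt{n})\log{n}}$ space (Theorem~\ref{thrm:apx_plan_dsvc} with $\epsilon = 1$); the region-decomposition step costs polynomial time and $\Oh{\abs{C}\log{n}} = \Oh{k \log{n}}$ space (Lemma~\ref{lemm:reg_decomp}, noting $c_V + c_E = 1$ for \pVC); and the reduction rules only enumerate constant-size subsets of each region's boundary and perform incidence checks, so they run in polynomial time and logarithmic space. Chaining these steps via the output-stream-as-input convention (so no intermediate output is stored) yields total space $\Oh{(k + \log{n} + \sqrt{n})\log{n}}$ and polynomial time. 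The main obstacle I anticipate is not any single step but the careful accounting in the vertex-count tally — in particular, making sure the region-count bound and the per-region vertex bound are combined without double-counting boundary/anchor vertices, so that the final constant is exactly $46$.
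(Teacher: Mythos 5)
Your proposal matches the paper's proof essentially step for step: the same three-stage algorithm (approximate vertex cover via the scheme with $\epsilon = 1$, region decomposition via Lemma~\ref{lemm:reg_decomp} with $c_V = 1$, $c_E = 0$, then the rules \textsf{VtxCovReg} and \textsf{VtxCovCleanup}), the same invocations of Lemmas~\ref{lemm:redn_vc_reg} and~\ref{lemm:redn_vc_clean}, the same $2k + 7 \times 3 \times 2k + 2k = 46k$ tally, and the same resource accounting via output-stream chaining. Your explicit use of the region-count bound $c_V(3\abs{C} - 6)$ and your caveat about anchor vertices being counted both in $C$ and on region boundaries only make the (already generous) upper bound more transparent; they do not change the argument.
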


\begin{proof}
  Consider the following algorithm, which takes as input an $n$-vertex planar graph and a number $k \in \N$.
  \begin{enumerate}
    \item Compute a $2$-approximate vertex cover $C$ for $G$ using the procedure of Lemma~\ref{lemm:plan_bd_width}.
    \item With $G$, $C$, $c_V = 1$ and $c_E = 0$ as input, compute a maximal region decomposition $\mathcal{R}$ of $G$ with respect to $C$ using the procedure of Lemma~\ref{lemm:reg_decomp}.
    \item Reduce the regions in $\mathcal{R}$ using the rules \textsf{VtxCovReg} and \textsf{VtxCovCleanup}, and output the reduced graph $G'$.
  \end{enumerate}

  Using arguments similar to the $\pDS$ case, it is easy to establish the resource bounds: the reduction rules again involve simple checks on neighbourhoods of single vertices, so they can be carried out in logarithmic space.

  The reduced graph $G'$ consists of $C$, the reduced regions, and vertices outside of the regions. The set $C$ has at most $2k$ vertices, each reduced region has at most $7$ vertices (Lemma~\ref{lemm:redn_vc_reg}), and there are at most $\abs{C}$ vertices outside of the regions (Lemma~\ref{lemm:redn_vc_clean}). Thus, $G'$ has at most $2k + 7 \times 3 \times 2k + 2k = 46k$ vertices.
\end{proof}

\section{Conclusion}

The algorithms developed in this paper show that $\pDS$ and $\pVC$ can be solved on planar graphs in FPT time even under severe restrictions on space. For example, when the parameter $k$ satisfies $k = \Oh{n^{\epsilon}}$, the amount of space required is $\Oh{(\sqrt{n} + n^{\epsilon}) \log{n}}$.

Such resource bounds become relevant, for example, when one has a huge problem instance at hand, say of size $N$, and the amount of available volatile memory is only $\Oh{N^{\delta}}$, for some $0.5 < \delta < 1$. The algorithms of this paper may be used to kernelize the large instance so that it can fit in volatile memory. In volatile memory, the reduced problem instance can then be dealt with using any number of complex, memory-hungry approaches.

Independently, the approach of computing region decompositions and reducing individual regions could be useful in devising other restricted-memory algorithms.

\newpage
\bibliography{external/references}

\end{document}